\documentclass[twocolumn,showpacs,amsmath,amssymb,aps,prx,footinbib,floatfix,superscriptaddress,longbibliography,nofootinbib]{revtex4-1}
\usepackage{graphicx}
\usepackage{braket}
\usepackage{savesym}
\usepackage{amsfonts}
\usepackage{bm}
\usepackage{color}
\usepackage{subfigure}
\usepackage{wasysym}
\usepackage{upgreek}
\usepackage{float}
\usepackage{dsfont}
\usepackage{mathtools}% Loads amsmath
\usepackage{multirow}
\usepackage[usenames,dvipsnames,svgnames,table]{xcolor}
\usepackage{hyperref}
\usepackage{hhline} %for double horizontal lines without vertical break in tabular environment
\usepackage{multirow} %for cells that span multiple rows in columns
\usepackage[normalem]{ulem}
\usepackage{amsthm}

\newtheorem{lemma}{Lemma}
\newtheorem{corollary}{Corollary}

\theoremstyle{remark}

\usepackage[english]{babel}
\usepackage{makecell}

\hypersetup{
	colorlinks=true,       % false: boxed links; true: colored links
	linkcolor=VioletRed,  % color of internal links (change box color with linkbordercolor)
    citecolor=JungleGreen,        % color of links to bibliography
    filecolor=Orchid,      % color of file links
    urlcolor=black           % color of external links
}
\begin{document}

% \title{Dilated Recurrent Neural Network Quantum States for Long-range Correlations}

% \title{Geometric Control of Long-Range Correlations in Dilated Recurrent Neural Network Quantum States}

\title{Geometry-Induced Long-Range Correlations in Recurrent Neural Network Quantum States}

% \title{Long-Range Correlations from Geometry in Recurrent Neural Network Quantum States}

%Correlation-Aware Neural Quantum States via Dilated Recurrent Architectures

%Architectural Control of Long-Range Correlations in Autoregressive Neural Quantum States

%Geometric Control of Long-Range Correlations in Recurrent Neural Quantum States

%The Geometry of Long-Range Correlations in Dilated Recurrent Neural Quantum States

\author{Asif Bin Ayub}
\affiliation{Department of Applied Mathematics, University of Waterloo, Waterloo, ON N2L 3G1, Canada}
\affiliation{Perimeter Institute for Theoretical Physics, 31 Caroline St N, Waterloo, ON N2L 2Y5, Canada}

\author{Amine M. Aboussalah}
\affiliation{NYU Tandon School of Engineering, 6 MetroTech Center, Brooklyn, New York 11201, United States}
\author{Mohamed Hibat-Allah}
\email{mohamed.hibatallah@uwaterloo.ca}
\affiliation{Department of Applied Mathematics, University of Waterloo, Waterloo, ON N2L 3G1, Canada}
\affiliation{Vector Institute,  Toronto,  Ontario,  M5G 0C6,  Canada}

\date{\today}

% %%%%%%%%%%%%%%%%%% ABSTRACT %%%%%%%%%%%%%%%%%%
\begin{abstract}
Neural Quantum States based on autoregressive recurrent neural network (RNN) wave functions enable efficient sampling without Markov-chain autocorrelation, but standard RNN architectures are biased toward finite-length correlations and can fail on states with long-range dependencies. A common response is to adopt transformer-style self-attention, but this typically comes with substantially higher computational and memory overhead. Here we introduce dilated RNN wave functions, where recurrent units access distant sites through dilated connections, injecting an explicit long-range inductive bias while retaining a favorable $\mathcal{O}(N \log N)$ forward pass scaling. We show analytically that dilation changes the correlation geometry and can induce power-law correlation scaling in a simplified linearized and perturbative setting. Numerically, for the critical 1D transverse-field Ising model, dilated RNNs reproduce the expected power-law connected two-point correlations in contrast to the exponential decay typical of conventional RNN ansätze. We further show that the dilated RNN accurately approximates the one-dimensional Cluster state, a paradigmatic example with long-range conditional correlations that has previously been reported to be challenging for RNN-based wave functions. These results highlight dilation as a simple geometric mechanism for building correlation-aware autoregressive neural quantum states.    
\end{abstract}

\maketitle

%%%%%%%%%%%%%%%%%% Introduction %%%%%%%%%%%%%%%%%%

\section{Introduction}

Neural Quantum States (NQS) have emerged as a powerful variational ansatz for the study of quantum many-body systems, offering a flexible and complementary alternative to traditional numerical techniques such as tensor networks and quantum Monte Carlo methods~\cite{ANDROSIUK1993377,LAGARIS19971,Carleo2017-NN_VMC, Lan24}. Among the various NQS architectures, Recurrent Neural Network (RNN) wave functions~\cite{Hibat-Allah2020-RNN, roth2020iterativeretrainingquantumspin, PhysRevResearch.5.013216, Lange_2024} have attracted particular attention due to their favorable computational scaling, structural versatility, and their relationship to language models~\cite{Melko2024, Malica_2025}. By factorizing the many-body wave function into a product of conditional probabilities, RNN wave functions enable efficient sampling and evaluation with a forward pass computational cost that scales linearly with system size~\cite{Hibat-Allah2020-RNN}. Moreover, their autoregressive nature allows for iterative retraining, making it possible to extrapolate models trained on smaller systems to larger system sizes~\cite{roth2020iterativeretrainingquantumspin, 6ccd-wzhz, nh89-6jmf}. More broadly, this reflects a recurring theme in sequential learning: preserving the relevant dynamical structure of the problem can substantially improve extrapolation~\cite{Aboussalah2023RIM}. These properties, together with the ease of generalization to higher spatial dimensions~\cite{Hibat-Allah2020-RNN, hibatallah2024supplementingrecurrentneuralnetwork} and exotic lattices~\cite{Hibat_Allah_2025}, make RNN wave functions an appealing tool for large-scale quantum simulation.

Despite these advantages, RNN-based NQS face notable challenges. In particular, standard RNN architectures are biased toward finite-length correlations. Recent work has clarified the role of correlation structure in variational neural-network-based simulations. In particular, Ref.~\cite{YangPreskill2024-ClusterStatePaper} demonstrated that the presence of long-range conditional correlations can severely inhibit the ability of RNN wave functions to approximate certain quantum ground states. Complementing this perspective, Ref.~\cite{doschl2025importancecorrelationsneuralquantum} investigated how correlations are represented internally in neural quantum states and showed that even simple quantum states can require access to high-order correlations in the network representation. Taken together, these findings highlight that the difficulty of variational neural-network-based simulations is closely tied to how correlations are encoded in the ansatz, rather than to representational expressivity alone. This broader viewpoint is consistent with recent results in machine learning showing that the way information is structured and propagated within a model can be as important for representation quality and generalization as the overall expressive power of the architecture~\cite{Aboussalah2023RIM,Aboussalah2025GNNTopology}.

In this work, we show that these limitations can be addressed by introducing dilated connections into RNNs~\cite{Chang2017DilatedRNN, Hibat-Allah2021-VNA, Khandoker_2023}. By allowing recurrent units to directly access information from distant sites, dilated RNNs provide an explicit architectural mechanism for encoding long-range dependencies, effectively injecting a long-range inductive bias while preserving a favorable computational scaling compared with attention-based alternatives.

We first present an analytical argument based on a linearized RNN model, showing that dilated connections modify the effective correlation geometry of the ansatz and can yield power-law correlation scaling. While standard RNN constructions are generically associated with exponentially decaying correlations~\cite{shen2019mutualinformationscalingexpressive, YangPreskill2024-ClusterStatePaper}, dilated RNNs admit regimes in which the first-order approximate connected two-point correlation function is bounded from below by a power-law. This establishes a direct link between architectural design and the emergent correlation structure of autoregressive neural quantum states.

Guided by this theoretical insight, we then investigate the performance of dilated RNN wave functions in numerical simulations of prototypical many-body systems. For the critical one-dimensional transverse-field Ising model (TFIM), we show that dilated RNNs accurately reproduce the expected power-law decay of connected two-point correlations and the associated critical scaling behavior, in contrast to the exponential decay typically observed in standard RNN constructions. Finally, we apply our approach to the one-dimensional Cluster state, a paradigmatic example of a system exhibiting long-range conditional correlations. While previous studies have reported significant difficulties in approximating this ground state using RNNs~\cite{YangPreskill2024-ClusterStatePaper, McNaughton2025-ku}, we show that our dilated RNN wave function succeeds at finding this ground state. Taken together, these results position dilation as a principled and computationally efficient approach for constructing correlation-aware autoregressive neural quantum states. More generally, our use of dilation reflects a broader strategy of incorporating structural priors into the model architecture to improve representation quality~\cite{Aboussalah2025GeoHNN,Aboussalah2025GNNTopology}.

%%%%%%%%%%%%%%%%%% Methods %%%%%%%%%%%%%%%%%%

\section{Methods}
\label{sec:methods}

% Introducing the general class of NQS
\subsection{Autoregressive Neural Quantum States}

NQS are a class of variational wave functions $\ket{\Psi_{\bm{\theta}}}$ that represent a system of $N$ qubits with a neural network ansatz~\cite{Carleo2017-NN_VMC, Lan24}, whose weights and biases $\{\bm{\theta}\}$ form the space of variational parameters. In this study, we restrict our attention to spin-$\frac{1}{2}$ systems and choose the computational basis $\boldsymbol{\sigma}=\{-1,1\}^N$. The NQS $\ket{\Psi_{\bm{\theta}}}$ is modeled, over $2^N$ possible $N$-qubit states $\ket{\boldsymbol{\sigma}}$, as:
\begin{align}
\begin{split}
    \ket{\Psi_{\bm{\theta}}} &= \sum_{\boldsymbol{\sigma}} \psi_{\bm{\theta}}({\boldsymbol{\sigma}})
    \ket{\boldsymbol{\sigma}} \\
    \psi_{\bm{\theta}}(\boldsymbol{\sigma}) &= \sqrt{P_{\bm{\theta}}(\boldsymbol{\sigma})}e^{i\phi_{\bm{\theta}}(\boldsymbol{\sigma})} ,
    \label{eq:amp}
\end{split}
\end{align}
where the amplitudes $\psi_{\bm{\theta}}(\boldsymbol{\sigma}) \in \mathbb{C}$, with phases $\phi(\boldsymbol{\sigma}) \in [-\pi,\pi]$ and probabilities $P_{\bm{\theta}}(\boldsymbol{\sigma})$, such that $\sum_{\boldsymbol{\sigma}}P_{\bm{\theta}}(\boldsymbol{\sigma})=1$.

Optimization of the parameters $\{{\bm{\theta}}\}$ is achieved through Variational Monte Carlo (VMC)~\cite{becca_sorella_2017}, which involves sampling from the RNN underlying distribution $P_{\bm \theta}(\boldsymbol{\sigma})$. Here, the variants of RNN architectures used in this paper are characterized by the \emph{autoregressive} structure of their probability distributions, which allows for a more efficient sampling compared to typical Markov-Chain Monte Carlo methods~\cite{germain2015mademaskedautoencoderdistribution, uria2016neuralautoregressivedistributionestimation,Wu_2019, Hibat-Allah2020-RNN, PhysRevLett.124.020503}:
\begin{align}
    P_{\bm{\theta}}(\boldsymbol{\sigma}) = \prod_{n=1}^N P_{\bm{\theta}}(\sigma_n|\boldsymbol{\sigma}_{<n}) \equiv  \prod_{n=1}^N p_n. \label{eq:p_n}
\end{align}
At step $n$, the probability of drawing a spin $\sigma_n$ depends on all previous spins denoted as $\boldsymbol{\sigma}_{<n}$. The corresponding phase is given by the sum of conditional phases~\cite{Hibat-Allah2020-RNN} as: \[\phi_{\bm{\theta}}(\boldsymbol{\sigma})=\sum_{n=1}^N  \phi(\sigma_n|\boldsymbol{\sigma}_{<n}).\]

% Introduction and overview of RNNs
\subsection{Recurrent Neural Network Wave Functions}

\begin{figure}[t] 
    \centering 
    \includegraphics[width=\linewidth]{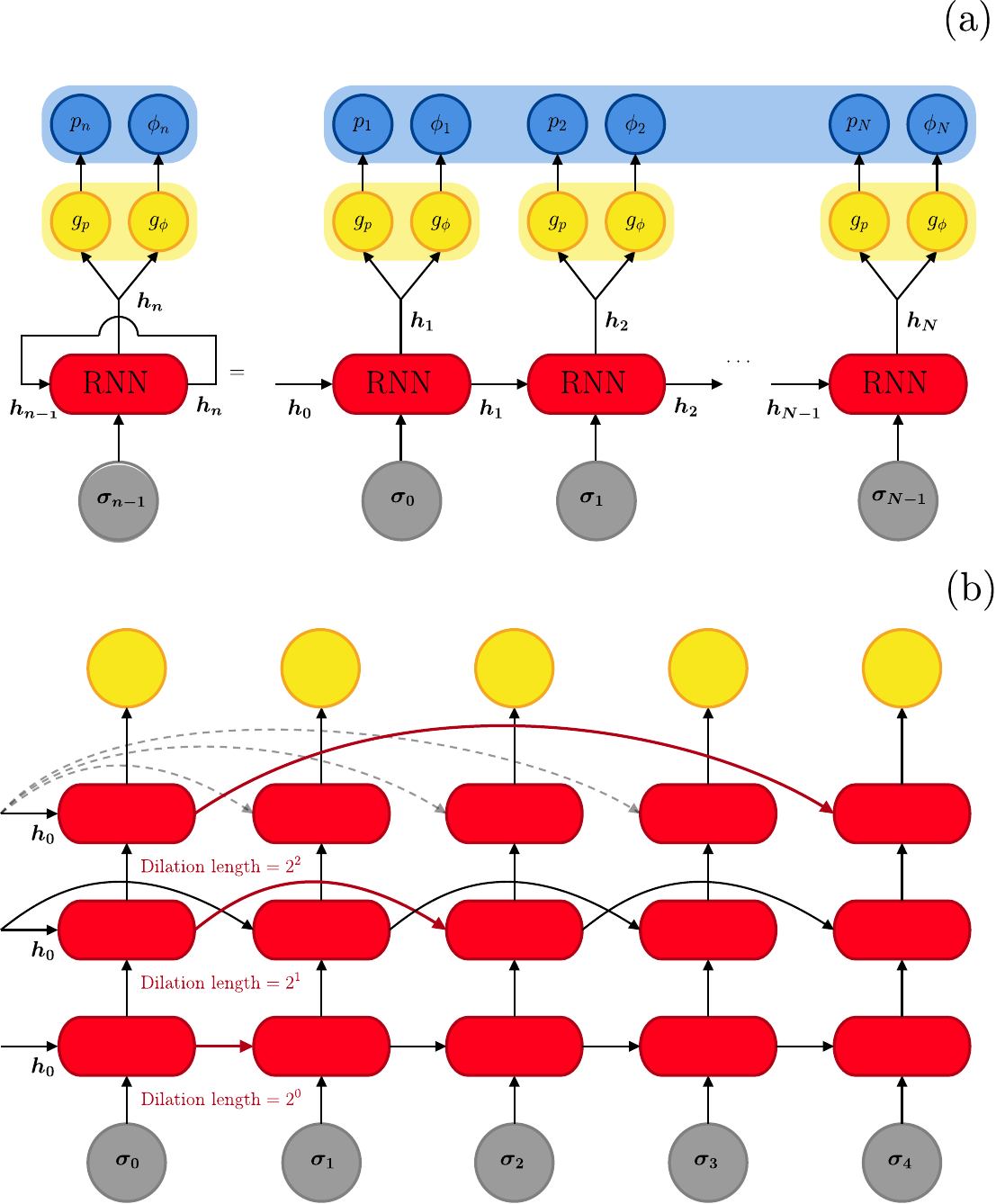}
    \caption{Schematic representation of the RNN architectures used. 
    (a) RNN wave function which sequentially generates the conditional probabilities $p_n$ and conditional phases $\phi_n$ by taking the one-hot encoded spin $\bm{\sigma}_{n-1}$ and hidden vector $\bm{h}_{n-1}$ as inputs at the $n^\text{th}$ step.
    (b) A dilated RNN illustration for a spin configuration with length $N = 5$. 
    For the $n^\text{th}$ site, the network uses the hidden vector $\bm{h}^{(l)}_{n-2^{l-1}}$ from the $(n-2^{l-1})^\text{th}$ site at the same layer $l$, in addition to $\bm{h}_n^{(l-1)}$ from the preceding layer. If the dilation length is larger than the spin position $n$, a zero initialization $\bm{h}_0$ is used instead.}
    \label{fig:RNN_schematic}
\end{figure}

RNNs have long been used in natural-language processing, notably for machine translation and speech recognition~\cite{cho2014learningphraserepresentationsusing, sak2014longshorttermmemorybased, sutskever2014sequencesequencelearningneural}. In addition, RNNs are theoretically well motivated because of their ability to act as universal approximators of Turing machines~\cite{Schafer2006-RNN-approximators}. At its core, an RNN operates by taking an initial input and augmenting it with a hidden vector $\bm{h}_{n}$, which acts as a memory unit that encodes correlations between the sequential inputs. The RNN then recursively generates hidden states for subsequent inputs, capturing key features of the sequence along the way. As described in Ref.~\cite{Hibat-Allah2020-RNN}, to construct a wave function with an RNN, we choose a spin configuration of length $N$ and sequentially feed the one-hot encoded spins $\bm{\sigma}_n$ as inputs. The task of the RNN is then to learn the conditional probabilities and the conditional phases, with which we can compute expectation values using the VMC framework. 

The simplest RNN wave function is based on a vanilla RNN cell~\cite{lipton2015criticalreviewrecurrentneural}. The recursion relation used to generate the hidden vector $\bm{h}_{n}$ for the $n^\mathrm{th}$ spin ${\sigma}_{n}$ of the spin configuration $\bm{\sigma}$ is given by:
\begin{align} 
    \bm{h}_{n} &= f\left(W 
    [\bm{h}_{n-1} ; \bm{\sigma}_{n-1}] + \bm{b}\right),\label{eq:RNN_hidden-state}
\end{align}
where $[. ;.]$ denotes the vector concatenation operation, $\bm{\sigma}_n \in \mathbb{R}^{2}$ is the one-hot encoding of $\sigma_n$, $\bm{b} \in \mathbb{R}^{d_h}$ is a bias, and $W \in \mathbb{R}^{d_h\times(d_h+2)}$ is a weight matrix. Here $d_h$ is the dimension of the hidden vector $\bm{h}_n$, which controls the expressivity of the RNN. The non-linear activation function $f$ is applied element-wise to the linearly transformed inputs. The sequential calculation of the hidden states is illustrated in Fig.~\ref{fig:RNN_schematic}(a).

The output layer, depicted by the shaded blocks in Fig.~\ref{fig:RNN_schematic}(a), generates the conditional probability $P_{\bm{\theta}}({\sigma}_n|\bm{\sigma}_{<n})$ and the conditional phase $\phi_{\bm{\theta}}(\sigma_n|\bm{\sigma}_{<n})$. Here the output layer uses the memory vector $\bm{h}_n$ as follows:
\begin{align}
    P_{\bm{\theta}}({\sigma}_n|\bm{\sigma}_{<n}) &= g_P(U\bm{h}_n+\bm{c})\cdot \bm{\sigma}_n \nonumber\\
    \phi_{\bm{\theta}}(\sigma_n|\bm{\sigma}_{<n}) &= g_\phi(V\bm{h}_n+\bm{d}) \cdot \bm{\sigma}_n,
    \label{eqn:output_layer}
\end{align}
where weights $U,V \in \mathbb{R}^{2 \times d_h}$, and biases $\bm{c},\bm{d} \in \mathbb{R}^2$. For this study, we choose the following element-wise non-linear activation functions $g_P$ and $g_\phi$, defined for a vector $\bm{v}$ as:
\begin{eqnarray}
        g_P(v_n) &= \mathrm{Softmax}(v_n) = \frac{\exp(v_n)}  {\sum_i \exp(v_i)} \nonumber \\
        g_\phi(v_n) &= \pi\ \mathrm{Softsign}(v_n) =  \pi\frac{v_n}{1+|v_n|},
    \label{eq:outputs}
\end{eqnarray}
which yield a normalized probability distribution $||P(\bm{\sigma})||_1=1$~\cite{Hibat-Allah2020-RNN}, and a conditional phase $\phi_n \in (-\pi,\pi)$. 
For the class of \emph{stoquastic} Hamiltonians with a sign-free structure in the computational basis~\cite{bravyi2015montecarlosimulationstoquastic}, the amplitudes are given by $\psi_{\bm{\theta}}(\boldsymbol{\sigma})=\sqrt{P_{\bm{\theta}}(\boldsymbol{\sigma})}$, where the Softsign output layer is not needed.

Together, the weights $\{W,U,V\}$ and biases $\{\bm{b,c,d}\}$ form the set of variational parameters $\{\bm{\theta}\}$ defined in Eq.~\eqref{eq:amp}. 
The same set of parameters $\{\bm{\theta}\}$ is shared across the sequence.
Throughout this paper, we use a Gated Recurrent Unit (GRU) cell~\cite{cho2014learningphraserepresentationsusing, zhou2016minimal} instead of the vanilla RNN cell described here. The details of the GRU cell implementation can be found in Appendix~\ref{app:Hyperparameters}. 

In addition to the challenges of expressivity and learnability of NQS ans\"atze, sequential architectures like RNNs suffer from the exploding/vanishing gradient problem. Among other strategies, introducing dilated connections into standard RNNs has been explored in the context of machine learning and combinatorial optimization~\cite{Chang2017DilatedRNN,Hibat-Allah2021-VNA,Khandoker_2023, Khandoker_2025}. The dilated connections are introduced through a deep multilayer RNN, with the length of dilated connections increasing with each layer, such that recurrent connections in higher layers introduce long-range interactions between spins. Related recurrent constructions have also been explored in other sequential settings, where stacked deep dynamic recurrent architectures were used to model complex multi-scale temporal dependencies~\cite{Aboussalah2020SDDRRL, koutnik2014clockworkrnn, chung2017hierarchicalmultiscalerecurrentneural}.

To incorporate the additional layers with dilated connections, we have to modify the recursion relation in Eq.~\eqref{eq:RNN_hidden-state} as follows: 
\begin{align}
\begin{split}
    \bm{h}_{n}^{(1)} &= f\left(W^{(1)} 
    [\bm{h}^{(1)}_{n-1} ; \bm{\sigma}_{n-1}] + \bm{b}^{(1)}\right) \label{eq:dilatedRNN_h} \\
    \bm{h}_{n}^{(l)} &= f\left(W^{(l)} 
    [\bm{h}^{(l)}_{\max(n-s^{(l)},0)} ; \bm{h}_{n}^{(l-1)}] + \bm{b}^{(l)}\right) \\ & \quad \text{for } 2 \leq l \leq L.
\end{split}
\end{align}
For this work, we set the dilation length $s^{(l)}=2^{l-1}$ for the $l^\mathrm{th}$ layer as demonstrated in Fig.~\ref{fig:RNN_schematic}(b). For a spin chain of length $N$, we can have a network depth of up to $L=\lceil\log_2(N)\rceil$. Finally, the output layer takes as input the hidden state $\bm{h}_n^{(L)}$ from the final deep layer $L$ to produce a final output which can be a conditional probability or a conditional phase as in Eq.~\eqref{eqn:output_layer}. Note that dilated RNNs admit a favorable scaling for a forward pass $\mathcal{O}(N \log N)$ compared with a quadratic $\mathcal{O}(N^2)$ scaling of standard Transformer models as a function of system size $N$~\cite{NIPS2017_3f5ee243}.

\subsection{Theoretical Scaling of Two-Point Correlations}

\begin{figure}
    \centering
    \includegraphics[width=\linewidth]{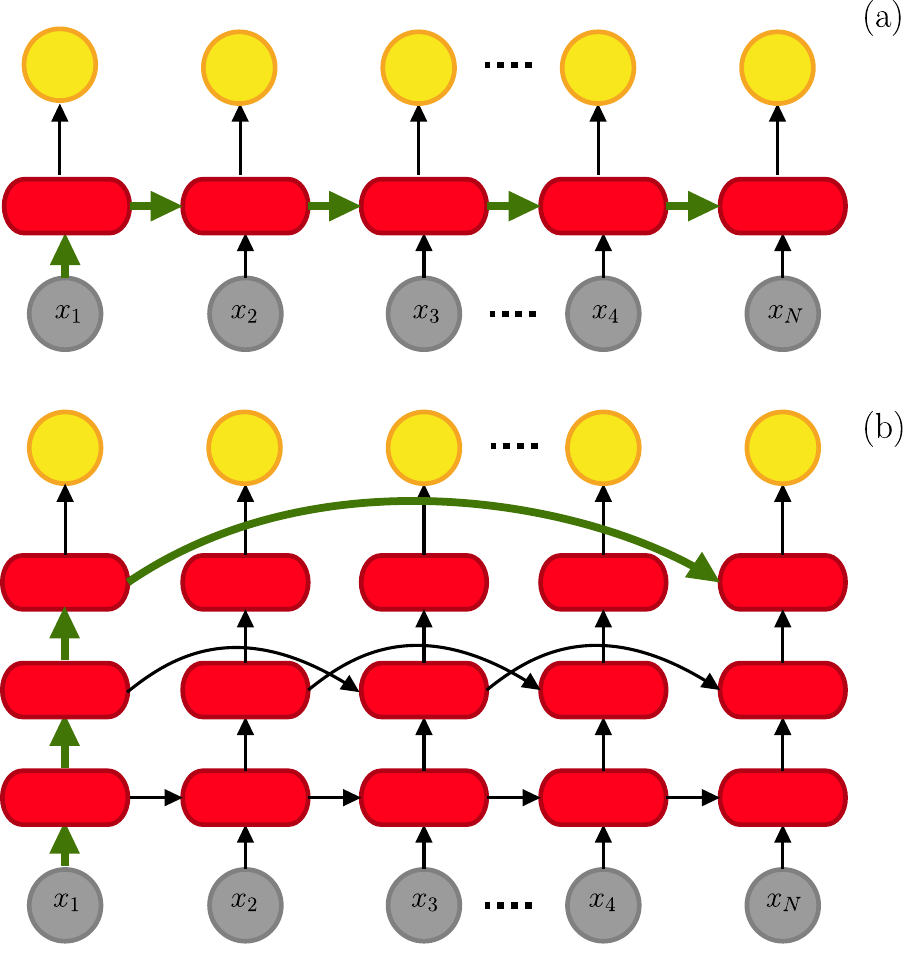}
    \caption{Illustration of the shortest path (in bold green arrows) between the last RNN cell at site $x_n$ and the first site $x_1$ of (a) the vanilla RNN and (b) the dilated RNN given a system size $N$. For the vanilla RNN, the shortest path between sites separated by lag $n$ is given by $\mathcal{O}(n)$, whereas for the dilated RNN, the shortest path has worst-case scaling $\mathcal{O}(\log(n))$.}
    \label{fig:geodesics}
\end{figure}

To gain insight into the theoretical scaling of two-point correlations in vanilla and dilated RNNs, we consider a linearized recurrent architecture with scalar input and output, and diagonal hidden dynamics. The vanilla RNN is simplified to:
\[
\bm{h}_n=\bm{w} x_{n-1}+U_h \bm{h}_{n-1},\qquad o_n=\bm{u}^\top \bm{h}_n + \bm{b},
\]
$\bm{u}, \bm{w} \in \mathbb{R}^{d_h}$ and $b \in \mathbb{R}$ are free parameters. The inputs are taken as scalar discrete variables $x_{i} = \pm 1$ for simplicity. We further assume that $U_h$ is a diagonalizable matrix, so without loss of generality, we take $U_h=\mathrm{diag}(\lambda_1,\dots,\lambda_{d_h})$ as a diagonal matrix. We impose a condition on the eigenvalues $ 0<\lambda_j<1$ to ensure that the RNN recursion is stable with no exponential growth~\cite{pascanu2013difficultytrainingrecurrentneural}. Furthermore, we make the simplifying assumption that the recursion activation function is the identity and that samples $x_n$ are generated from a discrete distribution as follows:
\[
P(x_n=\pm1\mid x_{<n})=\frac{e^{\pm o_n}}{2\cosh o_n}.
\]
In Appendix~\ref{app:proof}, we demonstrate that the first-order approximation of the connected two-point correlation function follows an exponential law in the distance lag \(n\),
\begin{equation*}
    C_{n}\ \propto\ \exp{(-c n)},
\end{equation*}
for a constant $c$ that depends on the parameters. We highlight that there are exceptions, namely $C_{n}$ remaining constant in the distance $1 \leq n \leq N$; however, this marginal case corresponds to a fine-tuned case as highlighted in Appendix~\ref{app:proof}. Note that this result is similar to the observation of Ref.~\cite{shen2019mutualinformationscalingexpressive} for an Elman RNN with continuous variables. This result can be understood by considering the longest-separation case shown in Fig.~\ref{fig:geodesics}(a), namely the path between the inputs $x_1$ and $x_N$ with distance lag $n = N-1$. In the vanilla architecture, the shortest path length scales linearly with the system size, $\ell_{\min}(n)\sim n$. As a result, any multiplicative attenuation of propagating signals $\sim \lambda^{\ell_{min}(n)}$ remains exponential in the lag $n$.

In contrast, for a dilated RNN, we use \(L\) layers with a dilation distance \(s^{(l)}=B^{\,l-1}\) ($B = 2$ in this work). Here, the hidden states at each layer, and the output are computed as follows:
\[
\bm{h}^{(1)}_n = \bm{w} x_{n-1} + U_h^{(1)} \bm{h}^{(1)}_{n-1},
\]
\[
\bm{h}^{(l)}_n = \bm{h}^{(l-1)}_{n} + U_h^{(l)} \bm{h}^{(l)}_{n-s^{(l)}}\ (2\le l\le L),
\]
\[
o_n=\bm{u}^\top \bm{h}^{(L)}_n + b,
\]
with \(U_h^{(l)}=\mathrm{diag}(\lambda^{(l)}_1,\dots,\lambda^{(l)}_{d_h})\) and \(0<\lambda^{(l)}_j<1\). For this simplified linearized dilated RNN, we show under explicit assumptions that the connected correlator \[C_n = \langle x_n x_1 \rangle - \langle x_n \rangle \langle x_1 \rangle,\]in the first-order perturbation, has a power-law lower bound with distance $n$ as
\[
C_{n}\ = \Omega \left(n^{-\alpha}\right).
\]
Here $\alpha$ is a positive exponent which depends on the parameter values. To illustrate this point, Fig.~\ref{fig:geodesics}(b) shows the longest-separation case between $x_1$ and $x_N$ with lag $n = N-1$, for which $\ell_{\min}(n)$ has worst-case scaling $\mathcal{O}(\log n)$ (see Corollary~\ref{cor:lmin_alogm} in Appendix~\ref{app:proof}). As a result, a multiplicative attenuation of the form $\lambda^{\ell_{\min}(n)}$ yields at least a power-law decay rather than an exponential decay. More details about our derivation are provided in Appendix~\ref{app:proof}. Interestingly, our result shares a similar spirit with the Multi-scale Renormalization Ansatz (MERA) in Tensor Networks, where going from Matrix Product States (MPS) to MERA transforms exponentially decaying correlations into long-range correlations decaying as a power-law~\cite{Vidal_2008}.

%%%%%%%%%%%%%%%%%% Results %%%%%%%%%%%%%%%%%%
\section{Results}

The results reported below are obtained using dilated RNN wave functions introduced in Sec.~\ref{sec:methods}. We focus our benchmark on the one-dimensional transverse-field Ising model and the one-dimensional Cluster state. Here, the variational parameters $\{\bm{\theta}\}$ are optimized by minimizing the energy expectation value $\bra{\Psi_{\bm{\theta}}} \hat{H} | \Psi_{\bm{\theta}} \rangle$ using Adam optimizer~\cite{kingma2017adammethodstochasticoptimization}.

Within a VMC optimization step, gradients of the energy with respect to the variational parameters are computed using automatic differentiation and used to iteratively update the RNN parameters until convergence of the ground-state energy is achieved~\cite{Hibat-Allah2020-RNN}. At the end of the optimization, we compute the ground-state energy and two-point correlation functions, which are presented and discussed in the following subsections. More details about our hyperparameters are found in Appendix~\ref {app:Hyperparameters}.

\subsection{1D Transverse-Field Ising Model}

\begin{figure}[t] 
    \centering 
        \includegraphics[width=\linewidth]{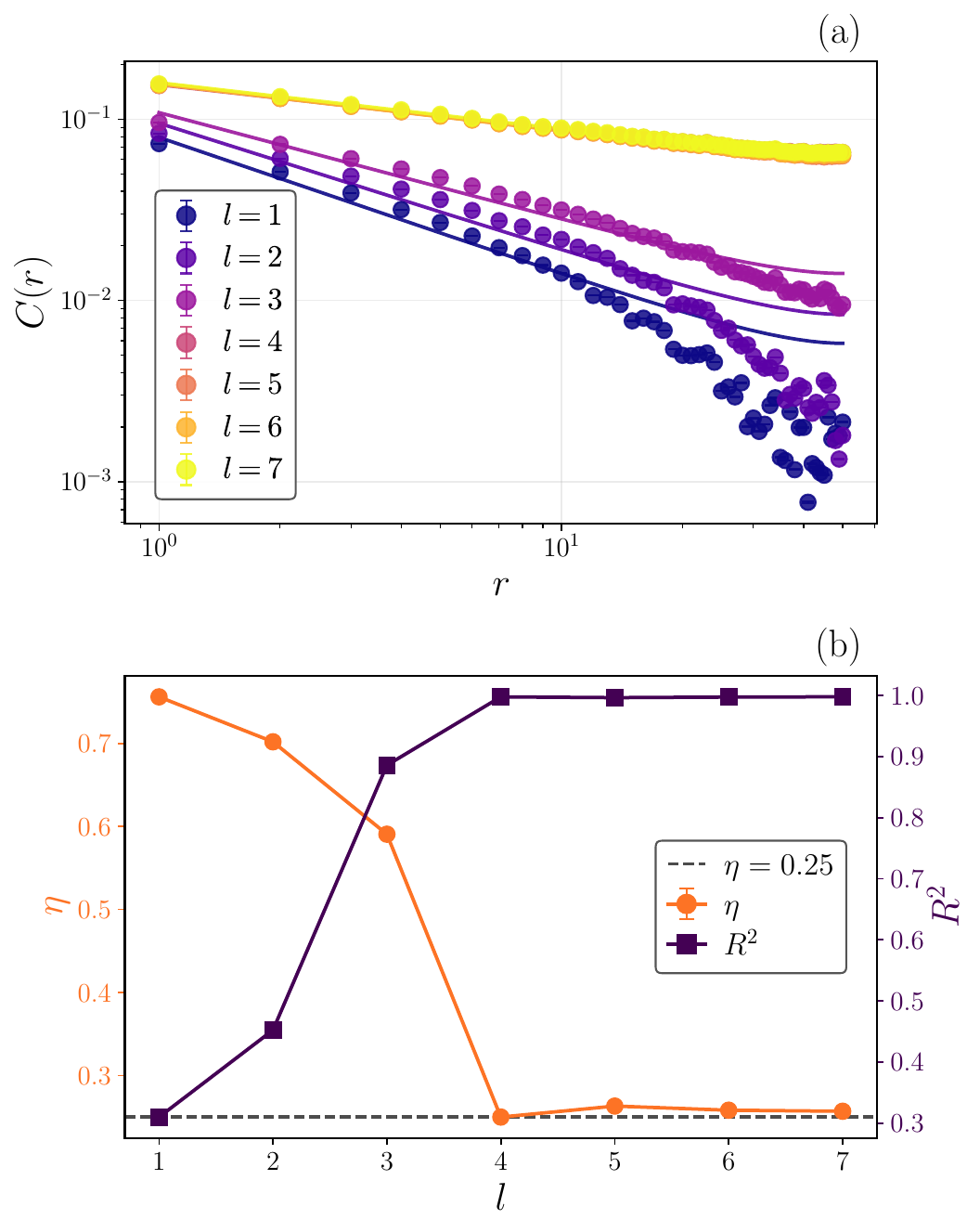}
        \label{fig:EvDilations_TFIM100}
    \caption{
        (a) A plot of the connected two-point correlation function $C(r)$ as a function of distance $r$ computed by dilated RNNs for different numbers of layers using the 1D periodic TFIM with $N = 100$ spins as a testbed. We observe that for $l \geq 4$ layers, the expected power-law behavior is captured by our dilated RNN ansatz. Error bars correspond to the one-standard deviation~\cite{becca_sorella_2017}. (b) A plot of the critical exponent $\eta$ versus different numbers of dilation layers obtained from fitting $C(r)$ against the chord length $L_r$ and the corresponding $R^2$ values. $\eta$ approaches the theoretical value of $0.25$ for $l \geq 4$ layers. The goodness of the fit, indicated by $R^2$, is close to unity for those layers. Error bars on $\eta$ denote the one-standard-error uncertainty from the covariance matrix of the least-squares fit.
    }
    \label{fig:TFIB}
\end{figure}

To demonstrate the advantage of the dilated architecture, we use the 1D Transverse-field Ferromagnetic Ising Model (TFIM) as a testbed for probing long-range correlations at the critical point~\cite{Mbeng_2024, Sachdev_2011}. Its Hamiltonian is given by:
\begin{equation}
    \hat{H}_{\text{TFIM}} = - \sum_{i = 1}^N \hat{\sigma}^{z}_i \hat{\sigma}^{z}_{i+1} - g \sum_{i=1}^N \hat{\sigma}^{x}_{i},
    \label{eq:TFIB}
\end{equation}
where $\hat{\sigma}^{(x,y,z)}_i$ are Pauli matrices acting on site $i$. Since this Hamiltonian is stoquastic~\cite{bravyi2015montecarlosimulationstoquastic}, we use a positive RNN wave function without a phase to target the ground state of this model~\cite{Hibat-Allah2020-RNN}.

To probe long-range correlations, we focus on the connected two-point correlation function $C(r)$ between pairs of spins $i,i+r$, defined as follows:
\begin{equation}
    C(r) = \braket{\hat{\sigma}^z_{i}\hat{\sigma}^z_{i+r}}-\braket{\hat{\sigma}^z_i}\braket{\hat{\sigma}^z_{i+r}}.
\end{equation}
In this paper, we choose $i = N/4$, and we limit the range of $r$ to $N/2$. Here, we focus on the critical point $g = 1$, and we adopt periodic boundary conditions (PBC), such that the two-point correlation in the ground state follows a power-law decay~\cite{di_francesco_conformal_1997}:
\begin{equation*}
    C(r) \propto L_r^{-\eta}.
\end{equation*}
Note that $\eta$ is a universal critical exponent that we extract in our fitting of the two-point correlations, in the log space, obtained at the end of training our RNN wave functions. To account for PBC, we use the chord length $L_r$, between the spins $i$ and $i+r$, defined as~\cite{Alba_2010,Pasquale_Calabrese_2004, PhysRevB.85.165121, HOLZHEY1994443, PhysRevLett.92.096402}
\begin{equation}
    L_r = \frac{N}{\pi} \sin \left (\frac{\pi r}{N}\right).
\end{equation}

The results are highlighted in Fig.~\ref{fig:TFIB}(a), where we show the evolution of $C(r)$ versus $r$ for different numbers of layers $l$ after training on 1D TFIM with $N = 100$ spins. For $l = 1$, we recover the traditional RNN wave function implementation with one layer, and for $l = 7$, we have a dilated RNN with the largest number of layers. As expected, the optimized RNN ansatz with $l = 1,2, 3$ layers failed to match a power-law behavior. This matches the empirical results of Ref.~\cite{Jreissaty_2026}, which provides numerical evidence of exponential-like decay in single-layer RNNs. For $l \geq 4$, we observe a strong agreement with a power-law decay with a coefficient of determination $R^2$ very close to $1$, confirming the goodness of our fit, as illustrated in Fig.~\ref{fig:TFIB}(b). Furthermore, the extracted exponent $\eta$ is in agreement with the expected $\eta = 0.25$ for a (1+1)-dimensional Ising CFT~\cite{di_francesco_conformal_1997}.

\subsection{The 1D Cluster State}

\begin{figure}[t] 
    \centering  \includegraphics[width=0.96\linewidth]{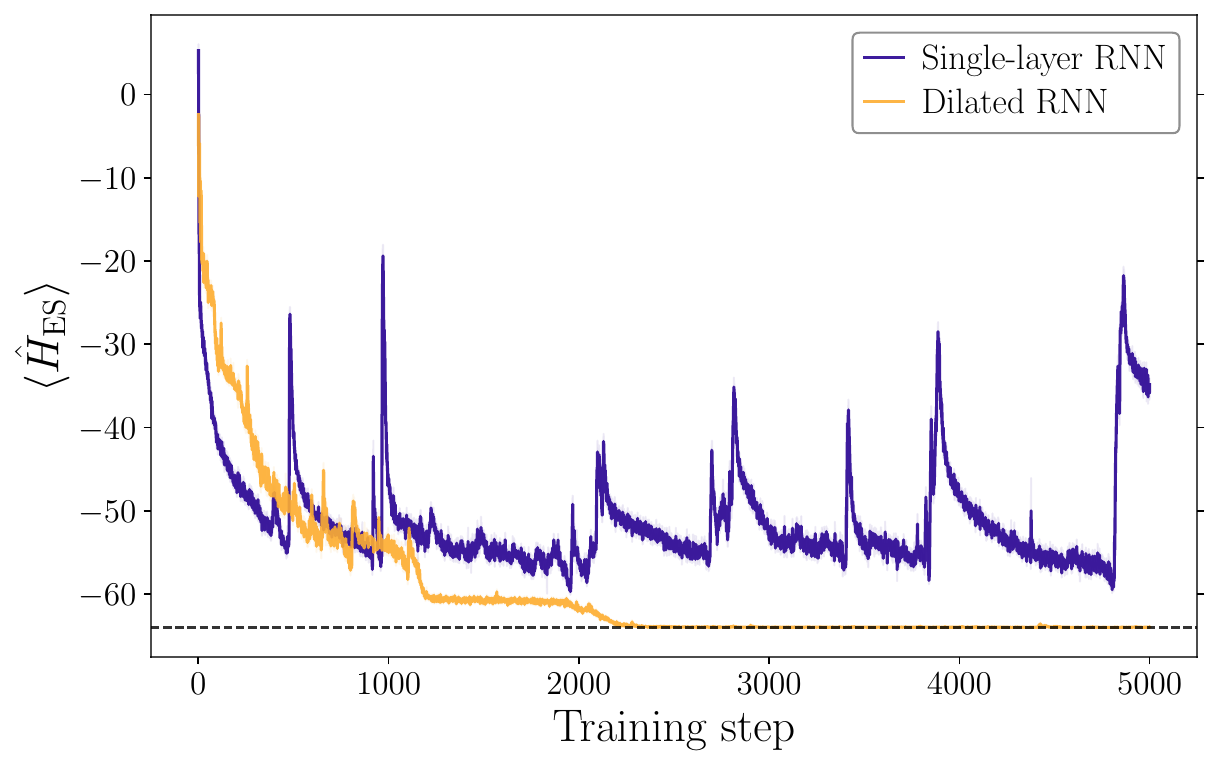}
        \label{fig:Cluter_state}
    \caption{RNN energy expectation values during training on the Cluster state benchmark for a single-layer RNN and a dilated RNN with $6$ layers. Note that the dilated architecture converges to the true ground state energy $E_G = -64$ indicated by the dashed line.
    }
    \label{fig:ClusterState}
\end{figure}

We next consider the 1D Cluster state, which corresponds to the ground state of the entanglement swapping Hamiltonian given by
\begin{align}
    \hat{H}_{\mathrm{ES}} = &- \sum _{k=2}^{N-2} \hat{\sigma}^x_{k-1}\hat{\sigma}^z_k \hat{\sigma}^x_{k+1}\nonumber\\
    &- \hat{\sigma}^z_1 \hat{\sigma}^x_2 - \hat{\sigma}^x_{N-1} \hat{\sigma}^x_N -  \hat{\sigma}^x_{N-2} \hat{\sigma}^z_{N-1} \hat{\sigma}^z_N.\label{eq:rotated_clusterH}
\end{align}
This state is highly entangled, and the entanglement is robust to spin measurements, making it an ideal candidate state for measurement-based quantum computing. Ref.~\cite{YangPreskill2024-ClusterStatePaper} shows that after applying a $y$-rotation $R_y(\theta)$ to $\hat{H}_{\rm ES}$, the conditional correlation $I(\theta)$ is upper bounded as:
\begin{equation*}
    I(\theta) \leq \cos(\theta)^{N-2}.
\end{equation*}
Additionally, Ref.~\cite{YangPreskill2024-ClusterStatePaper} reports performance degradation of variational RNN wave function simulations as $\theta\rightarrow0$ due to the presence of long-range conditional correlations, which is why we focus on studying the $\theta=0$ state as a benchmark for the dilated RNN. Since the ground state of this Hamiltonian is non-stoquastic, we use a complex dilated RNN wave function with a phase term. 

The results in Fig.~\ref{fig:ClusterState} demonstrate that the dilated RNN with $6$ layers can approximate the ground state energy for $N = 64$ within a relative error of $4(2) \times 10^{-5}$. This result outperforms our single-layer RNN and other single-layer RNN wave functions in Refs.~\cite{McNaughton2025-ku, YangPreskill2024-ClusterStatePaper}. It is also worth emphasizing that our single-layer RNN suffers from training instability, in contrast to the dilated RNN, where convergence to the ground state is smooth and stable. This observation supports the conclusion that our architectural design can not only enhance support for long-range conditional correlations but also improve trainability in a VMC training setup.

%%%%%%%%%%%%%%%%%% Conclusion %%%%%%%%%%%%%%%%%%
\section{Conclusions}

In this work, we introduced the dilated RNN architecture as a variational quantum state capable of encoding long-range correlations more effectively than standard RNN wave functions. Within a simplified linearized analysis, we showed that the first-order approximate connected two-point correlation function of a vanilla RNN decays exponentially in the asymptotic limit, whereas for a dilated RNN, it satisfies a power-law lower bound with distance under explicit assumptions. Beyond the NQS setting, our results support the broader view that incorporating suitable geometric structure into neural architectures can significantly improve the representation of long-range dependencies~\cite{Aboussalah2025GeoHNN}.

Our empirical results are consistent with the geometric scaling picture suggested by the linearized analysis. In particular, we show that for the 1D TFIM, we reproduce the expected power-law scaling in the two-point connected correlations of the ground state using a dilated RNN as opposed to an RNN with a small number of layers, which suffers from exponentially decaying correlations. We further reproduce the critical exponent $\eta$ of the (1+1)-dimensional Ising transition, confirming the success of our approach. Additionally, we use a complex dilated RNN wave function to target the one-dimensional Cluster state, which encodes long-range conditional correlations. Here, we show that a dilated RNN converges smoothly to the ground state energy, whereas a single-layer RNN exhibits unstable training behavior. Note that our empirical results are solely based on first-order gradient descent optimization, and we believe that applying second-order optimization techniques~\cite{Rende_2024, Chen2024, armegioiu2025functionalneuralwavefunctionoptimization} can further improve our results. We leave these investigations for future work.  

This work can be extended in multiple directions. In particular, a two-dimensional version of the dilated RNN can be devised and would be a great candidate for studying two-dimensional quantum many-body systems with long-range correlations. Additionally, we believe that a dilated RNN can likely handle logarithmic corrections to the area law of entanglement entropy~\cite{Eisert_2010} for critical quantum systems with long-range correlations. We expect future work incorporating dilated recurrent connections to make RNN wave functions a reliable and computationally cheaper alternative to simulating quantum many-body systems with long-range interactions, such as Rydberg atom arrays~\cite{Bernien2017}, and trapped ions~\cite{Zhang2017}.

%%%%%%%%%%%%%%%%% ACKNOWLEDGMENTS %%%%%%%%%%%%%%%%%%

\section*{Acknowledgments}
Computer simulations were made possible thanks to the Digital Research Alliance of Canada and the Math Faculty Computing Facility at the University of Waterloo. M.H. acknowledges support from the Natural Sciences and Engineering Research Council of Canada (NSERC) and the Digital Research Alliance of Canada. Research at Perimeter Institute is supported in part by the Government of Canada through the Department of Innovation, Science and Economic Development and by the Province of Ontario through the Ministry of Colleges and Universities.
%%%%%%%%%%%%%%%%% APPENDICES %%%%%%%%%%%%%%%%%%
\appendix

\section{Hyperparameters}

\label{app:Hyperparameters}
In this Appendix, we report the hyperparameters used in this paper. These hyperparameters are summarized in Tab.~\ref{tab:hyperparams}. Note that we use Adam optimizer~\cite{kingma2017adammethodstochasticoptimization} to minimize the energy expectation value in the VMC setup. Furthermore, we note that we use the Gated Recurrent Unit (GRU)~\cite{cho2014learningphraserepresentationsusing, zhou2016minimal} as our RNN cell following the standard implementation:
\begin{align}
\bm{g}_n &= \sigma\!\left( W_{\rm g}
\begin{bmatrix}
\bm{h}_{n-1}\\
\bm{\sigma}_{n-1}
\end{bmatrix}
+ \bm{b}_{\rm g} \right), \label{eq:gru_a}\\
\bm{r}_n &= \sigma\!\left( W_{\rm r}
\begin{bmatrix}
\bm{h}_{n-1}\\
\bm{\sigma}_{n-1}
\end{bmatrix}
+ \bm{b}_{\rm r} \right), \label{eq:gru_b}\\
\bm{h}^{\star}_n &= \tanh\!\left( \bm{r}_n \odot \bigl( W_{\rm h}  \bm{h}_{n-1} + \bm{b}_{\rm h}^{\star} \bigr)
+ W_{\rm in}\bm{\sigma}_{n-1} + \bm{b}_{\rm in}^{\star} \right), \label{eq:gru_c}\\
\bm{h}_n &= \bigl(\bm{1}-\bm{g}_n\bigr)\odot \bm{h}_{n-1} + \bm{g}_n \odot \bm{h}^{\star}_n.
\label{eq:gru_d}
\end{align}
Here, $\odot$ denotes the element-wise (Hadamard) product. The functions
$\sigma(\cdot)$ and $\tanh(\cdot)$ represent the standard sigmoid and hyperbolic
tangent activations, respectively. The reset gate $\bm{r}_n$ determines how
strongly the previous hidden state $\bm{h}_{n-1}$ influences the candidate state
$\bm{h}^{\star}_n$: small entries in $\bm{r}_n$ attenuate past information, while
entries close to one preserve it. The vector $\bm{g}_n$ acts as an update gate
that interpolates between retaining $\bm{h}_{n-1}$ and replacing it with the
candidate $\bm{h}^{\star}_n$. The matrices $W_{\rm g}, W_{\rm r}, W_{\rm h},
W_{\rm in}$ and bias terms $\bm{b}_{\rm g}, \bm{b}_{\rm r}, \bm{b}_{\rm h}^{\star}, \bm{b}_{\rm in}^{\star}$
are trainable parameters of the one-dimensional GRU cell.

\begin{table*}[p]
    \centering
    \begin{tabular}{|c|c|c|}\hline
       Figures & Hyperparameter & Value \\\hline
        \multirow{7}{*}{Fig.~\ref{fig:TFIB}} & Architecture & 1D positive RNN wave function\\
            & Number of samples used for training & $N_s = 100$ \\
            & Number of samples used for estimating correlations & $N_s = 100,000$ \\
            & Number of memory units & $d_h = 32$ \\
             & Training iterations & $100,000$  \\
             & Learning rate & $10^{-4}$\! \!  \\
             & Number of seeds & $1$  \\
             & Fitting window & $N/4$ to $3N/4$  \\
       \hline
               \multirow{5}{*}{Fig.~\ref{fig:ClusterState}} & Architecture & 1D complex RNN wave function\\
            & Number of samples used for training & $N_s = 100$ \\
            & Number of samples used for final energy estimation & $N_s = 50,000$ \\
            & Number of memory units & $d_h = 256$ \\
             & Learning rate & $10^{-3}$\! \! \\ 
             & Number of seeds & $1$  \\
             \hline
    \end{tabular}
    \caption{Hyperparameters used to obtain the results reported in this paper.}
    \label{tab:hyperparams}
\end{table*}

%%%%%%%%%%%%%%%%%%%%%%%%%%%%%%%%%%%%%%%%%%%%%%%%%%%%%%%%%%%%%%%%%%%%%%%%%%%%%%%%%%%%%

%%%%%%%%%%%%%%%%%%%%%%%%%%%%%%%%%%%%%%%%%%%%%%%%%%%%%%%%%%%%%%%%%%%%%%%%%%%%%%%%%%%%%
\section{Asymptotic decay of the two-point correlations for Vanilla and Dilated RNNs}
\label{app:proof}

In this Appendix, we provide a theoretical insight into the two-point correlation function scaling for simplified versions of the vanilla and dilated RNNs. We first discuss vanilla RNNs, and then we move to dilated RNNs. The next subsection collects the auxiliary results used in the derivation.

\subsection{Vanilla RNNs}
We assume that our inputs \(x_n\in\{-1,1\}\) and consider the linearized recurrence
\[
\bm{h}_n=\bm{w} x_{n-1}+U_h \bm{h}_{n-1},\qquad o_n=\bm{u}^\top \bm{h}_n+\bm{b},
\]
where \(\bm{h}_n\in\mathbb{R}^{d_h}\), \(\bm{w}, \bm{u}\in\mathbb{R}^{d_h}\), and \(U_h=\mathrm{diag}(\lambda_1,\ldots,\lambda_{d_h})\) with \(0<\lambda_j<1\). We assume \(0<\lambda_j<1\) so that each hidden mode is stable and attenuates monotonically with lag. More generally, the stability requirement is \(|\lambda_j|<1\), but restricting to positive eigenvalues avoids oscillatory behavior and keeps the asymptotic analysis transparent.

We assume \(\bm{h}_0=0\) and \(x_0=0\). The output channel is logistic:
\[
P(x_n=\pm1\mid x_{<n})=\frac{e^{\pm o_n}}{e^{o_n} + e^{-o_n}}=\frac{e^{\pm o_n}}{2\cosh o_n}.
\]
As a result, the conditional expectation is given as
\[
\begin{aligned}
\mathbb{E}[x_n \mid x_{<n}]
&= (+1)\,P(x_n=1 \mid x_{<n}), \\
&\quad + (-1)\,P(x_n=-1 \mid x_{<n}), \\
&= \frac{e^{o_n}}{2\cosh o_n} - \frac{e^{-o_n}}{2\cosh o_n}, \\
&= \frac{e^{o_n}-e^{-o_n}}{e^{o_n}+e^{-o_n}}, \\
&= \tanh o_n.
\end{aligned}
\]
We now study the connected correlator, which measures the dependence between site $1$ and site $n$
\[
C_n:=\mathbb{E}[x_nx_1]-\mathbb{E}[x_n]\mathbb{E}[x_1].
\]
Unrolling the recurrence gives
\[
o_n=b+\sum_{k<n}J_{n,k}\,x_k,\qquad J_{n,k}:=\bm{u}^\top U_h^{\,n-1-k}\bm{w}.
\]
Since \(J_{n,k}\), which is the weight measuring how strongly $x_k$ affects $o_n$, depends only on the lag \(m:=n-k\), we write \(J_{n,k}=k_m\) with
\[
k_m:=\bm{u}^\top U_h^{\,m-1}\bm{w}=\sum_{j=1}^{d_h}c_j\,\lambda_j^{m-1},\quad c_j:=u_jw_j,\quad m\ge1,
\]
meaning that $k_m$ is a sum of exponentially decaying modes, one for each hidden direction $j$. Since \(o_n=b+\delta_n\), where
\[
\delta_n:=\sum_{k<n}J_{n,k}x_k,
\]
we now quantify the regime in which \(\delta_n\) remains uniformly small so that a first-order expansion of \(\tanh(b+\delta_n)\) is justified. To make this weak-coupling expansion precise, define
\[
\varepsilon:=\sup_{n\ge1}\sum_{k<n}|J_{n,k}|
\le 
\sum_{j=1}^{d_h} |c_j| \sum_{m=1}^{n-1} \lambda_j^{m-1}
\le 
\sum_{j=1}^{d_h}\frac{|c_j|}{1-\lambda_j},
\]
and assume \(\varepsilon\ll1\). Since \(|x_k|=1\), the random field \(\delta_n\) satisfies \(|\delta_n|\le \varepsilon\) uniformly in \(n\). Therefore
\[
\tanh(b+\delta_n)=\tanh b+\beta\,\delta_n+r_n,\quad \beta:=1-\tanh^2 b,
\]
where \(r_n=O(\varepsilon^2)\) uniformly in \(n\). Multiplying by \(x_1\) and applying the tower rule, we get
\[
\begin{aligned}
\mathbb{E}[x_nx_1]
&=
\mathbb{E}\!\left[x_1\,\mathbb{E}[x_n\mid x_{<n}]\right], \\
&=
\tanh b\,\mathbb{E}[x_1]
+
\beta\sum_{k<n}J_{n,k}\,\mathbb{E}[x_kx_1]
+
O(\varepsilon^2).
\end{aligned}
\]
Similarly, by the tower property of conditional expectation, we have
\[
\begin{aligned}
\mathbb{E}[x_n]
&=
\mathbb{E}\!\left[\mathbb{E}[x_n\mid x_{<n}]\right] \\
&=
\mathbb{E}\!\left[ \tanh(b+\delta_n) \right] \\
&=
\tanh b
+
\beta\sum_{k<n}J_{n,k}\,\mathbb{E}[x_k]
+
O(\varepsilon^2).
\end{aligned}
\]
Hence, for \(n\ge2\),
\[
C_n
=
\beta\sum_{k=1}^{n-1}k_{n-k}\,C_k
+
O(\varepsilon^2).
\label{eq:corr_expan}
\]
Starting from the first-order correlation recursion
\begin{equation}
C_n
=
\beta\sum_{k=1}^{n-1}k_{n-k}\,C_k
+
R_n,
\qquad n\ge 2,
\label{eq:corr_expan_appendix_derivation}
\end{equation}
where \(R_n=O(\varepsilon^2)\), we now move on to generating functions. Let us define
\[
\begin{aligned}
C(z) &:=\sum_{n\ge1}C_nz^n,
\qquad
K(z):=\sum_{m\ge1}k_m z^m, \\
\qquad
R(z) &:=\sum_{n\ge2}R_n z^n.
\end{aligned}
\]
We multiply Eq.~\eqref{eq:corr_expan_appendix_derivation} by \(z^n\) and sum over \(n\ge2\). This gives
\[
\sum_{n\ge2} C_n z^n
=
\beta \sum_{n\ge2}\sum_{k=1}^{n-1} k_{n-k} C_k z^n
+
\sum_{n\ge2} R_n z^n.
\]
We now identify each term separately. First, since
\[
C(z)=\sum_{n\ge1}C_nz^n=C_1z+\sum_{n\ge2}C_nz^n,
\]
we have
\[
\sum_{n\ge2}C_nz^n = C(z)-C_1z.
\]
Next, consider the double sum:
\[
\sum_{n\ge2}\sum_{k=1}^{n-1} k_{n-k} C_k z^n.
\]
By setting
\[
m:=n-k,
\]
and since \(1\le k\le n-1\), we have \(m\ge1\), and therefore
\[
\sum_{n\ge2}\sum_{k=1}^{n-1} k_{n-k} C_k z^n
=
\sum_{k\ge1}\sum_{m\ge1} k_m\, C_k\, z^{k+m}.
\]
Factoring the powers of \(z\), this becomes
\[
\begin{aligned}
\sum_{k\ge1}\sum_{m\ge1} k_m\, C_k\, z^{k+m}
&=
\left(\sum_{k\ge1} C_k z^k\right)
\left(\sum_{m\ge1} k_m z^m\right) \\
&=
C(z)K(z).
\end{aligned}
\]
Finally,
\[
\sum_{n\ge2}R_n z^n = R(z).
\]
Substituting these identities into the summed equation yields
\[
C(z)-C_1z
=
\beta\,C(z)\,K(z)+R(z).
\]
If we neglect the remainder \(R(z)\), which is of order \(O(\varepsilon^2)\), we obtain the first-order approximation
\[
C^{\mathrm{app}}(z)-C_1z=\beta\,C^{\mathrm{app}}(z)\,K(z).
\]
Rearranging gives
\[
C^{\mathrm{app}}(z)\bigl(1-\beta K(z)\bigr)=C_1z,
\]
and hence the first-order approximation
\begin{equation}
    C^{\mathrm{app}}(z)=\frac{C_{1}z}{1-\beta K(z)}.
    \label{eq:first_order_approx}
\end{equation}
Using
\[
k_m=\sum_{j=1}^{d_h} c_j\,\lambda_j^{m-1},
\]
we have
\[
\begin{aligned}
K(z)
&=
\sum_{m\ge1} k_m z^m 
=
\sum_{j=1}^{d_h} c_j \sum_{m\ge1} \lambda_j^{m-1} z^m \\
&=
\sum_{j=1}^{d_h} c_j\, z \sum_{m\ge1} (\lambda_j z)^{m-1}.
\end{aligned}
\]
Since
\[
\sum_{m\ge1} (\lambda_j z)^{m-1}=\frac{1}{1-\lambda_j z},
\qquad |\lambda_j z|<1,
\]
it follows that
\[
K(z)=\sum_{j=1}^{d_h}\frac{c_j z}{1-\lambda_j z}.
\]
To extract the asymptotic behavior of the coefficients \(C_n\), we now study the singularities of the generating function
\[
C^{\mathrm{app}}(z)=\frac{C_1z}{1-\beta K(z)}.
\]
These singularities are determined by the zeros of the denominator, so we introduce
\[
D(z):=1-\beta K(z).
\]
Let \(\lambda_*:=\max_j\lambda_j\), and suppose that \(D\) has a smallest positive zero \(z_*\in(0,1/\lambda_*)\). Suppose further that \(z_*\) is a zero of multiplicity \(q\ge1\), so that
\[
D(z)=(z-z_*)^qG(z),\qquad G(z_*)\neq0.
\]
Using
\[
C^{\mathrm{app}}(z)=\frac{C_1z}{D(z)}
\qquad\text{and}\qquad
D(z)=(z-z_*)^qG(z),
\]
we get
\[
C^{\mathrm{app}}(z)=\frac{C_1}{(z-z_*)^q}\,\frac{z}{G(z)}.
\]
Since \(G(z_*)\neq0\), the factor \(z/G(z)\) is analytic at \(z_*\), so
\[
\frac{z}{G(z)}=\frac{z_*}{G(z_*)}+O(z-z_*).
\]
Substituting this into the previous expression gives
\[
C^{\mathrm{app}}(z)
=
\frac{C_1}{(z-z_*)^q}
\left(
\frac{z_*}{G(z_*)}+O(z-z_*)
\right).
\]
Therefore,
\[
C^{\mathrm{app}}(z)
=
\frac{C_1z_*}{G(z_*)}\,\frac{1}{(z-z_*)^q}
+
O\!\left(\frac{1}{(z-z_*)^{q-1}}\right).
\]
Since
\[
C^{\mathrm{app}}(z)=\sum_{n\ge1} C_n^{(\mathrm{app})} z^n,
\]
we denote by \([z^n]\,F(z)\) the coefficient of \(z^n\) in the power-series expansion of \(F(z)\). In particular,
\[
[z^n]\,C^{\mathrm{app}}(z)=C_n^{(\mathrm{app})}.
\]
To extract the large-\(n\) behavior of the coefficients \(C_n^{(\mathrm{app})}\), we now use the local form of \(C^{\mathrm{app}}(z)\) near its dominant singularity. Suppose that \(z_*\) is the dominant singularity of \(C^{\mathrm{app}}\), and that it is a zero of \(D(z)\) of multiplicity \(q\ge 1\). From the previous local expansion, we have
\[
C^{\mathrm{app}}(z)
=
\frac{C_1z_*}{G(z_*)}\,\frac{1}{(z-z_*)^q}
+
O\!\left(\frac{1}{(z-z_*)^{q-1}}\right).
\]
It is convenient to rewrite the leading singular term in the standard transfer-theorem form. Since
\[
z-z_*=-z_*\left(1-\frac{z}{z_*}\right),
\]
we obtain
\[
\frac{1}{(z-z_*)^q}
=
\frac{(-1)^q}{z_*^q}\left(1-\frac{z}{z_*}\right)^{-q}.
\]
Hence
\[
C^{\mathrm{app}}(z)
=
B\left(1-\frac{z}{z_*}\right)^{-q}
+
O\!\left(\left(1-\frac{z}{z_*}\right)^{-(q-1)}\right),
\]
where
\[
B:=\frac{C_1z_*}{G(z_*)}\,\frac{(-1)^q}{z_*^q}.
\]
We now use the standard coefficient formula
\[
(1-w)^{-q}
=
\sum_{n\ge 0}\binom{n+q-1}{q-1}w^n,
\qquad q\in\mathbb{N}.
\]
Taking \(w=z/z_*\), this gives
\[
[z^n]\left(1-\frac{z}{z_*}\right)^{-q}
=
z_*^{-n}\binom{n+q-1}{q-1}.
\]
Moreover, as \(n\to\infty\),
\[
\binom{n+q-1}{q-1}
\sim
\frac{n^{q-1}}{(q-1)!}.
\]
Therefore singularity analysis~\cite{flajolet2009analytic} yields
\[
C_n^{(\mathrm{app})}
\sim
A\,n^{q-1}z_*^{-n},
\]
for some constant \(A\neq 0\). Equivalently, writing
\[
\rho:=z_*^{-1},
\]
we obtain
\[
C_n^{(\mathrm{app})}\sim A\,n^{q-1}\rho^n.
\]
This formula makes the exponential character completely explicit: the factor \(n^{q-1}\) is only polynomial, whereas the term \(\rho^n=z_*^{-n}\) is exponential in \(n\). In particular, if \(|z_*|>1\), then \(|\rho|<1\), so $C_n^{(\mathrm{app})}$ decays exponentially fast, up to the polynomial prefactor. Thus, whenever the dominant singularity lies strictly outside the unit circle, the first-order approximation predicts exponentially decaying correlations.

By contrast, if \(|z_*|<1\), then \(|\rho|>1\), and the approximation predicts exponential growth. Such a behavior cannot describe the exact correlator, because for \(x_n\in\{-1,1\}\) one has
\begin{equation}
\label{eq:bounded_correlator}
|C_n|
=
\bigl|\mathbb{E}[x_nx_1]-\mathbb{E}[x_n]\mathbb{E}[x_1]\bigr|
\le 2,
\end{equation}
Thus
\[
\sum_{n\ge1} |C_n z^n|
\le
2\sum_{n\ge1} |z|^n.
\]
As a result, the geometric series on the right converges whenever \(|z|<1\). Therefore, the exact generating function
\[
C(z)=\sum_{n\ge1} C_n z^n
\]
converges absolutely for every \(|z|<1\), which implies that its radius of convergence is at least \(1\), $i.e.$ $C(z)$ has no singularities for \(|z|<1\). Therefore a dominant singularity strictly inside the unit disk cannot govern the exact correlation sequence; it only indicates a breakdown of the first-order truncation.

The boundary case \(|z_*|=1\), and in particular \(z_*=1\), is different. In that case \(\rho=1\), so there is no exponential factor, and the first-order approximation gives
\[
C_n^{(\mathrm{app})}\sim A\,n^{q-1}.
\]
Thus, if \(q=1\), the approximation predicts a non-decaying \(O(1)\) contribution. Otherwise, when \(q\ge 2\), it predicts polynomial growth, which again cannot describe the exact bounded two-point connected correlations~\eqref{eq:bounded_correlator}. Hence the case \(|z_*|=1\) should be interpreted as a marginal or critical threshold of the first-order approximation. Furthermore, the case $|z_*| = 1$ is a fine-tuned case with zero measure, which is unlikely to happen when randomly initializing or training the parameters of the vanilla RNN.

A sufficient condition ensuring that the first-order denominator has no zeros on the closed unit disk is
\[
\sup_{|z|\le 1} |\beta K(z)|<1.
\]
Indeed, if this holds, then for every \(|z|\le 1\),
\[
|D(z)-1|=|\beta K(z)|<1,
\]
so in particular \(D(z)\neq 0\) on \(|z|\le 1\). Consequently, all zeros of \(D(z)\) lie strictly outside the unit disk, and the dominant singularity \(z_*\) of \(C^{\mathrm{app}}\) satisfies \(|z_*|>1\). In that regime, the first-order approximation predicts exponential decay of correlations, consistent with the finite-memory behavior expected for vanilla RNNs~\cite{shen2019mutualinformationscalingexpressive}.

\subsection{Dilated RNNs}
\label{app:drnn_proof}

We now consider a family of linearized dilated models indexed by the system size \(N\), with depth
\[
L=\lceil\log_B N\rceil
\]
and dilation schedule
\[
s^{(l)}=B^{l-1},\qquad 1\le l\le L.
\]
Here \(\lceil \log_B N\rceil\) denotes the ceiling of \(\log_B N\), i.e., the smallest integer greater than or equal to \(\log_B N\). For analytic tractability, we restrict our analysis to the diagonal recurrent map
\[
U_h=\mathrm{diag}(\lambda_1,\dots,\lambda_{d_h}),\qquad 0<\lambda_j<1,
\]
shared across all layers. The recursion relations for each layer are given as:
\[
\bm{h}_n^{(1)}=\bm{w} x_{n-1}+U_h \bm{h}_{n-1}^{(1)},
\]
\begin{equation}
    \bm{h}_n^{(l)}=\bm{h}_n^{(l-1)}+U_h \bm{h}_{n-s^{(l)}}^{(l)},\qquad 2\le l\le L.
    \label{eq:dilatedlayers}
\end{equation}
Additionally, the output is as follows:
\[
o_n=\bm{u}^\top \bm{h}_n^{(L)}+b.
\]
Similar to the vanilla case, one can unroll the recursion and write \(o_n\) as a linear combination of past inputs:
\[
o_n=b+\sum_{m\ge1} k_m\,x_{n-m},
\]
where \(k_m\) is the total effective weight of all computational paths that transmit the input \(x_{n-m}\) to the output \(o_n\). To define this precisely, let \(\mathcal P_m\) denote the set of all admissible recurrent paths in the dilated computation graph that connect the input at lag \(m\) to the output, i.e., all paths whose total time shift is exactly \(m\). For each path \(p\in\mathcal P_m\), let \(\ell(p)\) be the number of horizontal recurrent edges along \(p\). Since each horizontal recurrent edge contributes one factor of \(U_h\), a path of length \(\ell(p)\) contributes a factor \(U_h^{\ell(p)}\). Summing over all such paths gives
\begin{equation*}
k_m
=
\bm{u}^\top\!\Big(\sum_{p\in\mathcal P_m}U_h^{\,\ell(p)}\Big)\bm{w}.
\end{equation*}
Because \(U_h\) is diagonal, this can be written componentwise as
\begin{equation}
k_m
=
\sum_{j=1}^{d_h} c_j \sum_{p\in\mathcal P_m}\lambda_j^{\,\ell(p)},
\qquad c_j:=u_jw_j.
\label{eq:lambdaweight}
\end{equation}
Thus, in the dilated model, the lag kernel \(k_m\) is no longer a single exponential in \(m\); instead, it is a sum over all admissible paths whose weights depend on their recurrent length.

To understand the contribution of the shortest path, we first determine the smallest possible number of recurrent edges needed to realize a total lag \(m\). We denote this minimal number by \(\ell_{\min}(m)\). Write \(m\) in base \(B\) as
\[
m=\sum_{r=0}^{R} b_r B^r,
\qquad
b_r\in\{0,1,\dots,B-1\}.
\]
We then define the base-\(B\) digit sum of \(m\) by
\[
s_B(m):=\sum_{r=0}^{R} b_r.
\]
Lemma~\ref{lem:lmin_log_formal} shows that
\[
\ell_{\min}(m)=s_B(m).
\]
In other words, the smallest number of recurrent edges needed to realize a lag \(m\) is obtained by adding the digits of \(m\) in base \(B\). Corollary~\ref{cor:lmin_alogm} then gives the worst-case bound
\[
\ell_{\min}(m)\le (B-1)\bigl(\lfloor \log_B m\rfloor+1\bigr),
\]
which shows that \(\ell_{\min}(m)\) grows at most logarithmically with \(m\).

Furthermore, for each hidden mode \(j\),
\[
\sum_{p\in\mathcal P_m}\lambda_j^{\ell(p)}
\geq 
\lambda_j^{\ell_{\min}(m)}.
\]
Thus the attenuation at lag \(m\) is lower-bounded by \(\lambda_j^{\ell_{\min}(m)}\). Additionally, as \(0<\lambda_j<1\), the map \(x\mapsto \lambda_j^x\) is decreasing. Therefore the upper bound on \(\ell_{\min}(m)\) yields the lower bound
\[
\lambda_j^{\ell_{\min}(m)}
\ge
\lambda_j^{(B-1)(\lfloor \log_B m\rfloor+1)}.
\]
Using
\[
\lfloor \log_B m\rfloor+1\le \log_B m+1,
\]
we further obtain
\[
\lambda_j^{\ell_{\min}(m)}
\ge
\lambda_j^{(B-1)(\log_B m+1)}
=
\lambda_j^{B-1}\,\lambda_j^{(B-1)\log_B m}.
\]
Rewriting the logarithmic exponent as a power-law gives
\[
\lambda_j^{(B-1)\log_B m}
=
m^{(B-1)\log_B\lambda_j}.
\]
Since \(\log_B\lambda_j<0\), defining
\[
\alpha_j:=-(B-1)\log_B\lambda_j>0
\]
yields
\[
\lambda_j^{\ell_{\min}(m)}
\ge
\lambda_j^{B-1}\,m^{-\alpha_j}.
\]
Equivalently,
\[
\lambda_j^{\ell_{\min}(m)}=\Omega(m^{-\alpha_j}).
\]
As a result, we have
\[
\sum_{p\in\mathcal P_m}\lambda_j^{\ell(p)}
=
\Omega(m^{-\alpha_j}).
\]
At this point, to obtain a lower bound on the full kernel \(k_m\), we impose an additional sign-alignment assumption on the couplings:
\[
c_j=u_jw_j\ge0 \qquad \text{for all } j,
\]
and
\[
c_{j_*}>0
\qquad \text{for at least one mode } j_*.
\]
This is a sufficient assumption used only for the lower-bound argument below; it excludes cancellations between hidden modes in the read-in/read-out couplings. Under this assumption,
\begin{align*}
  k_m
&=
\sum_{j=1}^{d_h} c_j \sum_{p\in\mathcal P_m}\lambda_j^{\ell(p)}
\ge
c_{j_*}\sum_{p\in\mathcal P_m}\lambda_{j_*}^{\ell(p)}.
\end{align*}
Hence the lag kernel is bounded from below by a power-law:
\[
k_m=\Omega(m^{-\alpha_{j_*}}).
\]

We now transfer this lower bound to the first-order approximate two-point correlation. Under the same first-order weak-coupling approximation used in the vanilla case, the same formal derivation yields Eq.~\eqref{eq:first_order_approx} for the dilated model:
\[
C^{\mathrm{app}}(z)=\sum_{n\ge1} C_n^{\mathrm{app}} z^n
=
\frac{C_1z}{1-\beta K(z)},
\quad
K(z)=\sum_{m\ge1} k_m z^m.
\]
Here
\[
\beta=1-\tanh^2 b>0,
\]
and, since \(x_1^2=1\),
\[
C_1
=
\mathbb{E}[x_1^2]-\mathbb{E}[x_1]^2
=
1-\tanh^2 b
=
\beta
>
0.
\]
Moreover, under the sign-alignment assumption above, we have \(k_m\ge0\) for all \(m\). Therefore, the geometric expansion may be interpreted coefficientwise as a formal power series:
\[
C^{\mathrm{app}}(z)
=
C_1z\sum_{r\ge0}\beta^r K(z)^r,
\]
All coefficients in this expansion are nonnegative. In particular, retaining only the \(r=1\) term yields the coefficientwise lower bound
\[
C_n^{\mathrm{app}}
\ge
C_1\beta\,k_{n-1},
\]
for $n \geq 2$. Combining this with the lower bound on \(k_m\) gives
\[
C_n^{\mathrm{app}}
=
\Omega(n^{-\alpha_{j_*}}).
\]
We therefore conclude that, under the additional sign-alignment condition on the couplings, the first-order approximate connected correlation is bounded from below by a power-law. In particular, within this regime, it cannot decay faster than a power-law. This is the sense in which dilation changes the effective correlation behavior of the autoregressive ansatz: by reducing the minimal path length from linear to logarithmic in the lag, it replaces the fixed exponential attenuation of the vanilla architecture by power-law-type decay. We stress, however, that this is a lower-bound statement for the first-order approximation of the two-point correlation functions.

\subsection{Auxiliary Lemmas}

In this subsection, we show the intermediate results needed in subsection~\ref{app:drnn_proof}. Before stating the lemmas, we specify the admissible path structure used throughout this subsection. An admissible path is a directed path in the layered computation graph that uses horizontal recurrent edges and vertical inter-layer edges, where vertical moves are only upward. We define the path length $\ell(p)$ to be the number of horizontal recurrent edges along $p$; vertical edges do not contribute to $\ell(p)$ (see Eq.~\eqref{eq:dilatedlayers}) and therefore do not affect the $\lambda$-weight of each path as in Eq.~\eqref{eq:lambdaweight}.

\begin{lemma}[Minimal path length formula and uniqueness]
\label{lem:lmin_log_formal}
Fix an integer $B\ge 2$ and consider the exponential dilation schedule
\[
s^{(l)}=B^{l-1},\qquad l\ge 1.
\]
For a lag $m\ge 1$, let $\ell_{\min}(m)$ denote the minimal number of
horizontal recurrent edges needed to realize a total distance shift of exactly $m$, where one is
allowed to use edges of lengths $\{1,B,B^2,\dots\}$ any nonnegative integer number
of times. Let us write the unique base-$B$ expansion 
\[
\begin{aligned}
m&=\sum_{r=0}^{R} b_r B^r,\\
b_r&\in\{0,1,\dots,B-1\},\qquad
b_R\neq 0,\qquad
R=\lfloor \log_B m\rfloor.
\end{aligned}
\]
Then the minimal number of edges is exactly the base-$B$ digit sum:
\[
\ell_{\min}(m)=\sum_{r=0}^{R} b_r.
\]
In particular,
\[
1\le \ell_{\min}(m)\le (B-1)\bigl(\lfloor \log_B m\rfloor+1\bigr),
\]
and therefore
\[
\ell_{\min}(m)=O(\log m).
\]
For the binary schedule $B=2$, this reduces to
\[
\ell_{\min}(m)=\sum_{r=0}^{R}\mathbf{1}\{b_r\neq 0\},
\]
since $b_r\in\{0,1\}$ in that case.
\end{lemma}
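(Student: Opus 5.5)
The plan is to treat $\ell_{\min}(m)$ as the solution of a coin-change problem with denominations $\{1,B,B^2,\dots\}$. I would show that the greedy (base-$B$) representation is optimal, by an exchange argument applied to a minimal multiset of edges. Fix any representation $m=\sum_{r\ge0} a_r B^r$ with nonnegative integers $a_r$, where $a_r$ counts the horizontal edges of length $B^r$ used. The total number of edges is $\sum_r a_r$. Achievability is immediate: taking $a_r=b_r$ gives a representation with $\sum_r b_r$ edges. In the computation graph this corresponds to a path that uses $b_r$ recurrent edges in layer $r+1$ and moves upward through the layers, which is allowed because vertical moves are free and only upward. Moreover $R=\lfloor\log_B m\rfloor<L$ for $m<N$, so all required layers exist. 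Hence $\ell_{\min}(m)\le\sum_r b_r$.

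For the lower bound, I would use a carrying argument. Suppose some $a_r\ge B$. Replace $B$ copies of $B^r$ by one copy of $B^{r+1}$. This preserves the total shift and strictly decreases the edge count by $B-1\ge1$. Repeating, the quantity $\sum_r a_r$ strictly decreases at each step, so the process terminates, and it terminates at a representation with all $a_r\in\{0,\dots,B-1\}$. By uniqueness of base-$B$ expansions, that terminal representation is exactly $(b_r)$. Therefore every representation has $\sum_r a_r\ge\sum_r b_r$, with equality only if no carry was ever possible. This is also the uniqueness claim in the lemma's title: the minimizing multiset of edge lengths is unique, namely the digit representation, though the order of the edges along the path is free. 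One subtlety I expect to need care is the path-structure constraint. Vertical moves only go upward, so within a single path the edges must be traversed in nondecreasing layer order. This restriction can only shrink the set of admissible multisets, and the greedy multiset is realizable in this order, so the minimum is unchanged.

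The bounds then follow directly. $\ell_{\min}(m)\ge1$ because $b_R\neq0$. Also $\ell_{\min}(m)\le(B-1)(R+1)$ because each of the $R+1$ digits is at most $B-1$, and $R=\lfloor\log_B m\rfloor$ comes from $B^R\le m<B^{R+1}$. This gives $O(\log m)$. For $B=2$ each digit is $0$ or $1$, so the digit sum equals the number of nonzero digits. I expect the main obstacle to be the termination and uniqueness step, namely making precise that the carry reduction reaches the canonical expansion and that this settles minimality. Using $\sum_r a_r$ as a strictly decreasing potential handles termination, and appealing to the uniqueness of base-$B$ expansions handles the identification of the endpoint.
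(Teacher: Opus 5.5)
Your proposal is correct and is essentially the paper's proof: the same carry exchange (replacing $B$ edges of length $B^r$ by one edge of length $B^{r+1}$ saves $B-1$ edges), combined with uniqueness of the base-$B$ expansion. The paper argues that a shortest path cannot have any count $\ge B$, whereas you run the carry process to termination from an arbitrary representation, which also makes achievability explicit; one small correction is that upward-only vertical moves force the layer order, so the minimizing path itself is unique, not merely its multiset of edge lengths (your ``order is free'' remark is misleading here), and this is how the paper states uniqueness.
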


\begin{proof}
Any directed path realizing displacement \(m\) determines a representation
\[
m=\sum_{k=0}^{\lfloor \log_B m\rfloor} b_k B^k,
\qquad
b_k\in\mathbb N,
\]
where \(b_k\) is the number of horizontal edges of length \(B^k\). The number of horizontal edges in the path is therefore
\[
q=\sum_{k=0}^{\lfloor \log_B m\rfloor} b_k.
\]
Suppose that for some \(k\) one has \(b_k\ge B\). Then \(B\) jumps of size \(B^k\) may be replaced by one jump of size \(B^{k+1}\), since
\[
B\cdot B^k = B^{k+1}.
\]
This preserves the total displacement \(m\), while reducing the number of horizontal edges by \(B-1\). Hence, no shortest path can have \(b_k\ge B\). It follows that, for a shortest path, all coefficients satisfy
\[
0\le b_k \le B-1.
\]
Therefore, the coefficients \(b_k\) must coincide with the digits of the base-\(B\) expansion of \(m\), which is unique. Thus, once the counts $b_k$ of horizontal edges at each layer are fixed, the admissible path is uniquely determined, as vertical moves between layers are monotone upward and do not contribute to horizontal displacements.

Since each digit satisfies $0\le b_r\le B-1$ and there are exactly
$R+1=\lfloor \log_B m\rfloor+1$ digits in the expansion, we obtain
\[
\begin{aligned}
1\le \ell_{\min}(m)
&=\sum_{r=0}^{R} b_r \\
&\le (B-1)(R+1) \\
&= (B-1)\bigl(\lfloor \log_B m\rfloor+1\bigr).
\end{aligned}
\]
This proves $\ell_{\min}(m)=O(\log m)$.
For $B=2$, each digit belongs to $\{0,1\}$, so
\[
\sum_{r=0}^{R} b_r
=
\sum_{r=0}^{R}\mathbf{1}\{b_r\neq 0\},
\]
which gives the binary special case.
\end{proof}

\begin{corollary}[Worst-case and average logarithmic scaling]
\label{cor:lmin_alogm}
Under the assumptions of Lemma~\ref{lem:lmin_log_formal}, define
\[
s_B(0):=0.
\]
For $m\ge 1$, write the unique base-$B$ expansion of $m$ as
\[
m=\sum_{r=0}^{\lfloor \log_B m\rfloor} b_r B^r,
\qquad b_r\in\{0,1,\dots,B-1\},
\]
and define
\[
s_B(m):=\sum_{r=0}^{\lfloor \log_B m\rfloor} b_r.
\]
Also set $\ell_{\min}(0):=0$. Then
\[
\ell_{\min}(m)=s_B(m)
\qquad (m\ge 0).
\]

Consequently:
\begin{enumerate}
    \item[\rm (i)] \textit{Worst case over a full base-$B$ box:}
    for every integer $R\ge 0$,
    \[
    \max_{0\le m\le B^{R+1}-1}\ell_{\min}(m)
    =
    (B-1)(R+1),
    \]
    with equality at $m=B^{R+1}-1$.
    In particular, for every $m\ge 1$,
    \[
    \ell_{\min}(m)\le (B-1)\bigl(\lfloor \log_B m\rfloor+1\bigr),
    \]
    and hence
    \[
    \ell_{\min}(m)\le \frac{B-1}{\log B}\,\log m + O(1).
    \]

    \item[\rm (ii)] \textit{Average case over a full base-$B$ box:}
    if
    \[
    m \sim \mathrm{Unif}\{0,1,\dots,B^{R+1}-1\},
    \]
    then
    \[
    \mathbb{E}\bigl[\ell_{\min}(m)\bigr]
    =
    \frac{B-1}{2}(R+1).
    \]
    Equivalently, if $N=B^{R+1}$, then
    \[
    \mathbb{E}\bigl[\ell_{\min}(m)\bigr]
    =
    \frac{B-1}{2\log B}\,\log N.
    \]
\end{enumerate}
\end{corollary}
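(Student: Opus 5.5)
The identity $\ell_{\min}(m)=s_B(m)$ for $m\ge1$ is exactly Lemma~\ref{lem:lmin_log_formal}. For $m=0$ it holds by the two conventions $\ell_{\min}(0):=0=s_B(0)$. So both parts reduce to statements about the base-$B$ digit sum $s_B$, and I would work only with digits from here on.

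For part (i), fix $R\ge0$. Every integer $0\le m\le B^{R+1}-1$ has a base-$B$ expansion that, after padding with leading zeros, has exactly $R+1$ digits $b_0,\dots,b_R\in\{0,\dots,B-1\}$. The expansion is unique, and conversely every such digit tuple gives an $m$ in the box. Since each digit is at most $B-1$, we get $s_B(m)\le (B-1)(R+1)$. Equality holds when all digits equal $B-1$, and by the geometric sum $\sum_{r=0}^{R}(B-1)B^r=B^{R+1}-1$, this is $m=B^{R+1}-1$.

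For a general $m\ge1$, take $R=\lfloor\log_B m\rfloor$, so that $m\le B^{R+1}-1$; this gives the stated bound. Using $\lfloor\log_B m\rfloor\le \log m/\log B$, it becomes $\ell_{\min}(m)\le \frac{B-1}{\log B}\log m+(B-1)$, which is the $O(1)$ form.

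For part (ii), I would use the bijection just described between $\{0,\dots,B^{R+1}-1\}$ and $\{0,\dots,B-1\}^{R+1}$. Under it, the uniform law on $m$ is the product of uniform laws on the digits, so the digits $b_r$ are i.i.d.\ uniform on $\{0,\dots,B-1\}$. Then $\mathbb E[b_r]=\frac1B\sum_{d=0}^{B-1}d=\frac{B-1}{2}$. Linearity of expectation applied to $s_B(m)=\sum_r b_r$ gives $\frac{B-1}{2}(R+1)$. Substituting $R+1=\log_B N=\log N/\log B$ for $N=B^{R+1}$ gives the equivalent form.

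Nothing here is a real obstacle. The only point to state carefully is that the counting identity in (ii) needs the padded-digit bijection, rather than the unpadded expansion used in Lemma~\ref{lem:lmin_log_formal}. Leading zeros contribute nothing to $s_B$, so the two agree. The $m=0$ case is also covered: it corresponds to the all-zero tuple, which is consistent with the convention $s_B(0)=0$.
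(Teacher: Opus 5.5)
Your proposal is correct and follows the paper's proof essentially step for step. It reduces the identity to Lemma~\ref{lem:lmin_log_formal} plus the $m=0$ convention, bounds each of the $R+1$ digits by $B-1$ with equality at $B^{R+1}-1$, and uses independent uniform digits with linearity of expectation for (ii); your padded-digit bijection and the explicit constant $B-1$ in the $O(1)$ term are small precisions the paper leaves implicit.
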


\begin{proof}
For $m\ge 1$, the identity $\ell_{\min}(m)=s_B(m)$ is exactly the statement of
Lemma~\ref{lem:lmin_log_formal}; for $m=0$ it holds by the convention
$\ell_{\min}(0)=s_B(0)=0$.

For part (i), every integer $m\in\{0,1,\dots,B^{R+1}-1\}$ has a base-$B$
expansion
\[
m=\sum_{r=0}^{R} b_r B^r,
\qquad b_r\in\{0,1,\dots,B-1\}.
\]
Hence
\[
\ell_{\min}(m)=\sum_{r=0}^{R} b_r \le (B-1)(R+1).
\]
This upper bound is attained at
\[
m=B^{R+1}-1=\sum_{r=0}^{R} (B-1)B^r,
\]
for which all digits equal $B-1$. Therefore
\[
\max_{0\le m\le B^{R+1}-1}\ell_{\min}(m)=(B-1)(R+1).
\]
Now take $R=\lfloor \log_B m\rfloor$ to obtain
\[
\ell_{\min}(m)\le (B-1)\bigl(\lfloor \log_B m\rfloor+1\bigr).
\]
Since
\[
\lfloor \log_B m\rfloor+1
=
\frac{\log m}{\log B}+O(1).
\]
It follows that
\[
\ell_{\min}(m)\le \frac{B-1}{\log B}\,\log m + O(1).
\]

For part (ii), if $m$ is uniformly distributed on $\{0,1,\dots,B^{R+1}-1\}$,
then its digits $b_0,\dots,b_R$ are independent and each uniformly distributed on
$\{0,1,\dots,B-1\}$. Therefore
\[
\mathbb{E}[b_r]
=
\frac{0+1+\cdots+(B-1)}{B}
=
\frac{B-1}{2}.
\]
By linearity of expectation,
\[
\mathbb{E}\bigl[\ell_{\min}(m)\bigr]
=
\mathbb{E}\Bigl[\sum_{r=0}^{R} b_r\Bigr]
=
\sum_{r=0}^{R} \mathbb{E}[b_r]
=
(R+1)\frac{B-1}{2}.
\]
If we set $N=B^{R+1}$, then $R+1=\log_B N = \dfrac{\log N}{\log B}$, hence
\[
\mathbb{E}\bigl[\ell_{\min}(m)\bigr]
=
\frac{B-1}{2\log B}\,\log N.
\]
This proves the claim.
\end{proof}

%%%%%%%%%%%%%%%%%% REFERENCES %%%%%%%%%%%%%%%%%%
\clearpage
\bibliography{Biblio}

%merlin.mbs apsrev4-1.bst 2010-07-25 4.21a (PWD, AO, DPC) hacked
%Control: key (0)
%Control: author (0) dotless jnrlst
%Control: editor formatted (1) identically to author
%Control: production of article title (0) allowed
%Control: page (1) range
%Control: year (0) verbatim
%Control: production of eprint (0) enabled
\begin{thebibliography}{60}%
\makeatletter
\providecommand \@ifxundefined [1]{%
 \@ifx{#1\undefined}
}%
\providecommand \@ifnum [1]{%
 \ifnum #1\expandafter \@firstoftwo
 \else \expandafter \@secondoftwo
 \fi
}%
\providecommand \@ifx [1]{%
 \ifx #1\expandafter \@firstoftwo
 \else \expandafter \@secondoftwo
 \fi
}%
\providecommand \natexlab [1]{#1}%
\providecommand \enquote  [1]{``#1''}%
\providecommand \bibnamefont  [1]{#1}%
\providecommand \bibfnamefont [1]{#1}%
\providecommand \citenamefont [1]{#1}%
\providecommand \href@noop [0]{\@secondoftwo}%
\providecommand \href [0]{\begingroup \@sanitize@url \@href}%
\providecommand \@href[1]{\@@startlink{#1}\@@href}%
\providecommand \@@href[1]{\endgroup#1\@@endlink}%
\providecommand \@sanitize@url [0]{\catcode `\\12\catcode `\$12\catcode `\&12\catcode `\#12\catcode `\^12\catcode `\_12\catcode `\%12\relax}%
\providecommand \@@startlink[1]{}%
\providecommand \@@endlink[0]{}%
\providecommand \url  [0]{\begingroup\@sanitize@url \@url }%
\providecommand \@url [1]{\endgroup\@href {#1}{\urlprefix }}%
\providecommand \urlprefix  [0]{URL }%
\providecommand \Eprint [0]{\href }%
\providecommand \doibase [0]{http://dx.doi.org/}%
\providecommand \selectlanguage [0]{\@gobble}%
\providecommand \bibinfo  [0]{\@secondoftwo}%
\providecommand \bibfield  [0]{\@secondoftwo}%
\providecommand \translation [1]{[#1]}%
\providecommand \BibitemOpen [0]{}%
\providecommand \bibitemStop [0]{}%
\providecommand \bibitemNoStop [0]{.\EOS\space}%
\providecommand \EOS [0]{\spacefactor3000\relax}%
\providecommand \BibitemShut  [1]{\csname bibitem#1\endcsname}%
\let\auto@bib@innerbib\@empty
%</preamble>
\bibitem [{\citenamefont {Androsiuk}\ \emph {et~al.}(1993)\citenamefont {Androsiuk}, \citenamefont {Kułak},\ and\ \citenamefont {Sienicki}}]{ANDROSIUK1993377}%
  \BibitemOpen
  \bibfield  {author} {\bibinfo {author} {\bibfnamefont {J.}~\bibnamefont {Androsiuk}}, \bibinfo {author} {\bibfnamefont {L.}~\bibnamefont {Kułak}}, \ and\ \bibinfo {author} {\bibfnamefont {K.}~\bibnamefont {Sienicki}},\ }\bibfield  {title} {\enquote {\bibinfo {title} {Neural network solution of the schrödinger equation for a two-dimensional harmonic oscillator},}\ }\href {\doibase https://doi.org/10.1016/0301-0104(93)80153-Z} {\bibfield  {journal} {\bibinfo  {journal} {Chemical Physics}\ }\textbf {\bibinfo {volume} {173}},\ \bibinfo {pages} {377--383} (\bibinfo {year} {1993})}\BibitemShut {NoStop}%
\bibitem [{\citenamefont {Lagaris}\ \emph {et~al.}(1997)\citenamefont {Lagaris}, \citenamefont {Likas},\ and\ \citenamefont {Fotiadis}}]{LAGARIS19971}%
  \BibitemOpen
  \bibfield  {author} {\bibinfo {author} {\bibfnamefont {I.E.}\ \bibnamefont {Lagaris}}, \bibinfo {author} {\bibfnamefont {A.}~\bibnamefont {Likas}}, \ and\ \bibinfo {author} {\bibfnamefont {D.I.}\ \bibnamefont {Fotiadis}},\ }\bibfield  {title} {\enquote {\bibinfo {title} {Artificial neural network methods in quantum mechanics},}\ }\href {\doibase https://doi.org/10.1016/S0010-4655(97)00054-4} {\bibfield  {journal} {\bibinfo  {journal} {Computer Physics Communications}\ }\textbf {\bibinfo {volume} {104}},\ \bibinfo {pages} {1--14} (\bibinfo {year} {1997})}\BibitemShut {NoStop}%
\bibitem [{\citenamefont {Carleo}\ and\ \citenamefont {Troyer}(2017)}]{Carleo2017-NN_VMC}%
  \BibitemOpen
  \bibfield  {author} {\bibinfo {author} {\bibfnamefont {Giuseppe}\ \bibnamefont {Carleo}}\ and\ \bibinfo {author} {\bibfnamefont {Matthias}\ \bibnamefont {Troyer}},\ }\bibfield  {title} {\enquote {\bibinfo {title} {Solving the quantum many-body problem with artificial neural networks},}\ }\href@noop {} {\bibfield  {journal} {\bibinfo  {journal} {Science}\ }\textbf {\bibinfo {volume} {355}},\ \bibinfo {pages} {602--606} (\bibinfo {year} {2017})}\BibitemShut {NoStop}%
\bibitem [{\citenamefont {Lange}\ \emph {et~al.}(2024{\natexlab{a}})\citenamefont {Lange}, \citenamefont {Van~de Walle}, \citenamefont {Abedinnia},\ and\ \citenamefont {Bohrdt}}]{Lan24}%
  \BibitemOpen
  \bibfield  {author} {\bibinfo {author} {\bibfnamefont {Hannah}\ \bibnamefont {Lange}}, \bibinfo {author} {\bibfnamefont {Anka}\ \bibnamefont {Van~de Walle}}, \bibinfo {author} {\bibfnamefont {Atiye}\ \bibnamefont {Abedinnia}}, \ and\ \bibinfo {author} {\bibfnamefont {Annabelle}\ \bibnamefont {Bohrdt}},\ }\bibfield  {title} {\enquote {\bibinfo {title} {From architectures to applications: a review of neural quantum states},}\ }\href {\doibase 10.1088/2058-9565/ad7168} {\bibfield  {journal} {\bibinfo  {journal} {Quantum Science and Technology}\ }\textbf {\bibinfo {volume} {9}},\ \bibinfo {pages} {040501} (\bibinfo {year} {2024}{\natexlab{a}})}\BibitemShut {NoStop}%
\bibitem [{\citenamefont {Hibat-Allah}\ \emph {et~al.}(2020)\citenamefont {Hibat-Allah}, \citenamefont {Ganahl}, \citenamefont {Hayward}, \citenamefont {Melko},\ and\ \citenamefont {Carrasquilla}}]{Hibat-Allah2020-RNN}%
  \BibitemOpen
  \bibfield  {author} {\bibinfo {author} {\bibfnamefont {Mohamed}\ \bibnamefont {Hibat-Allah}}, \bibinfo {author} {\bibfnamefont {Martin}\ \bibnamefont {Ganahl}}, \bibinfo {author} {\bibfnamefont {Lauren~E.}\ \bibnamefont {Hayward}}, \bibinfo {author} {\bibfnamefont {Roger~G.}\ \bibnamefont {Melko}}, \ and\ \bibinfo {author} {\bibfnamefont {Juan}\ \bibnamefont {Carrasquilla}},\ }\bibfield  {title} {\enquote {\bibinfo {title} {Recurrent neural network wave functions},}\ }\href {\doibase 10.1103/PhysRevResearch.2.023358} {\bibfield  {journal} {\bibinfo  {journal} {Phys. Rev. Res.}\ }\textbf {\bibinfo {volume} {2}},\ \bibinfo {pages} {023358} (\bibinfo {year} {2020})}\BibitemShut {NoStop}%
\bibitem [{\citenamefont {Roth}(2020)}]{roth2020iterativeretrainingquantumspin}%
  \BibitemOpen
  \bibfield  {author} {\bibinfo {author} {\bibfnamefont {Christopher}\ \bibnamefont {Roth}},\ }\href {https://arxiv.org/abs/2003.06228} {\enquote {\bibinfo {title} {Iterative retraining of quantum spin models using recurrent neural networks},}\ } (\bibinfo {year} {2020}),\ \Eprint {http://arxiv.org/abs/2003.06228} {arXiv:2003.06228 [physics.comp-ph]} \BibitemShut {NoStop}%
\bibitem [{\citenamefont {Luo}\ \emph {et~al.}(2023)\citenamefont {Luo}, \citenamefont {Chen}, \citenamefont {Hu}, \citenamefont {Zhao}, \citenamefont {Hur},\ and\ \citenamefont {Clark}}]{PhysRevResearch.5.013216}%
  \BibitemOpen
  \bibfield  {author} {\bibinfo {author} {\bibfnamefont {Di}~\bibnamefont {Luo}}, \bibinfo {author} {\bibfnamefont {Zhuo}\ \bibnamefont {Chen}}, \bibinfo {author} {\bibfnamefont {Kaiwen}\ \bibnamefont {Hu}}, \bibinfo {author} {\bibfnamefont {Zhizhen}\ \bibnamefont {Zhao}}, \bibinfo {author} {\bibfnamefont {Vera~Mikyoung}\ \bibnamefont {Hur}}, \ and\ \bibinfo {author} {\bibfnamefont {Bryan~K.}\ \bibnamefont {Clark}},\ }\bibfield  {title} {\enquote {\bibinfo {title} {Gauge-invariant and anyonic-symmetric autoregressive neural network for quantum lattice models},}\ }\href {\doibase 10.1103/PhysRevResearch.5.013216} {\bibfield  {journal} {\bibinfo  {journal} {Phys. Rev. Res.}\ }\textbf {\bibinfo {volume} {5}},\ \bibinfo {pages} {013216} (\bibinfo {year} {2023})}\BibitemShut {NoStop}%
\bibitem [{\citenamefont {Lange}\ \emph {et~al.}(2024{\natexlab{b}})\citenamefont {Lange}, \citenamefont {Döschl}, \citenamefont {Carrasquilla},\ and\ \citenamefont {Bohrdt}}]{Lange_2024}%
  \BibitemOpen
  \bibfield  {author} {\bibinfo {author} {\bibfnamefont {Hannah}\ \bibnamefont {Lange}}, \bibinfo {author} {\bibfnamefont {Fabian}\ \bibnamefont {Döschl}}, \bibinfo {author} {\bibfnamefont {Juan}\ \bibnamefont {Carrasquilla}}, \ and\ \bibinfo {author} {\bibfnamefont {Annabelle}\ \bibnamefont {Bohrdt}},\ }\bibfield  {title} {\enquote {\bibinfo {title} {Neural network approach to quasiparticle dispersions in doped antiferromagnets},}\ }\href {\doibase 10.1038/s42005-024-01678-7} {\bibfield  {journal} {\bibinfo  {journal} {Communications Physics}\ }\textbf {\bibinfo {volume} {7}} (\bibinfo {year} {2024}{\natexlab{b}}),\ 10.1038/s42005-024-01678-7}\BibitemShut {NoStop}%
\bibitem [{\citenamefont {Melko}\ and\ \citenamefont {Carrasquilla}(2024)}]{Melko2024}%
  \BibitemOpen
  \bibfield  {author} {\bibinfo {author} {\bibfnamefont {Roger~G.}\ \bibnamefont {Melko}}\ and\ \bibinfo {author} {\bibfnamefont {Juan}\ \bibnamefont {Carrasquilla}},\ }\bibfield  {title} {\enquote {\bibinfo {title} {Language models for quantum simulation},}\ }\href {\doibase 10.1038/s43588-023-00578-0} {\bibfield  {journal} {\bibinfo  {journal} {Nature Computational Science}\ }\textbf {\bibinfo {volume} {4}},\ \bibinfo {pages} {11--18} (\bibinfo {year} {2024})}\BibitemShut {NoStop}%
\bibitem [{\citenamefont {Malica}\ \emph {et~al.}(2025)\citenamefont {Malica}, \citenamefont {Novoselov}, \citenamefont {Barnard}, \citenamefont {Kalinin}, \citenamefont {Spurgeon}, \citenamefont {Reuter}, \citenamefont {Alducin}, \citenamefont {Deringer}, \citenamefont {Csányi}, \citenamefont {Marzari}, \citenamefont {Huang}, \citenamefont {Cuniberti}, \citenamefont {Deng}, \citenamefont {Ordejón}, \citenamefont {Cole}, \citenamefont {Choudhary}, \citenamefont {Hippalgaonkar}, \citenamefont {Zhu}, \citenamefont {von Lilienfeld}, \citenamefont {Hibat-Allah}, \citenamefont {Carrasquilla}, \citenamefont {Cisotto}, \citenamefont {Zancanaro}, \citenamefont {Wenzel}, \citenamefont {Ferrari}, \citenamefont {Ustyuzhanin},\ and\ \citenamefont {Roche}}]{Malica_2025}%
  \BibitemOpen
  \bibfield  {author} {\bibinfo {author} {\bibfnamefont {Cristiano}\ \bibnamefont {Malica}}, \bibinfo {author} {\bibfnamefont {Kostya~S}\ \bibnamefont {Novoselov}}, \bibinfo {author} {\bibfnamefont {Amanda~S}\ \bibnamefont {Barnard}}, \bibinfo {author} {\bibfnamefont {Sergei~V}\ \bibnamefont {Kalinin}}, \bibinfo {author} {\bibfnamefont {Steven~R}\ \bibnamefont {Spurgeon}}, \bibinfo {author} {\bibfnamefont {Karsten}\ \bibnamefont {Reuter}}, \bibinfo {author} {\bibfnamefont {Maite}\ \bibnamefont {Alducin}}, \bibinfo {author} {\bibfnamefont {Volker~L}\ \bibnamefont {Deringer}}, \bibinfo {author} {\bibfnamefont {Gábor}\ \bibnamefont {Csányi}}, \bibinfo {author} {\bibfnamefont {Nicola}\ \bibnamefont {Marzari}}, \bibinfo {author} {\bibfnamefont {Shirong}\ \bibnamefont {Huang}}, \bibinfo {author} {\bibfnamefont {Gianaurelio}\ \bibnamefont {Cuniberti}}, \bibinfo {author} {\bibfnamefont {Qiushi}\ \bibnamefont {Deng}}, \bibinfo {author} {\bibfnamefont {Pablo}\ \bibnamefont {Ordejón}}, \bibinfo {author}
  {\bibfnamefont {Ivan}\ \bibnamefont {Cole}}, \bibinfo {author} {\bibfnamefont {Kamal}\ \bibnamefont {Choudhary}}, \bibinfo {author} {\bibfnamefont {Kedar}\ \bibnamefont {Hippalgaonkar}}, \bibinfo {author} {\bibfnamefont {Ruiming}\ \bibnamefont {Zhu}}, \bibinfo {author} {\bibfnamefont {O~Anatole}\ \bibnamefont {von Lilienfeld}}, \bibinfo {author} {\bibfnamefont {Mohamed}\ \bibnamefont {Hibat-Allah}}, \bibinfo {author} {\bibfnamefont {Juan}\ \bibnamefont {Carrasquilla}}, \bibinfo {author} {\bibfnamefont {Giulia}\ \bibnamefont {Cisotto}}, \bibinfo {author} {\bibfnamefont {Alberto}\ \bibnamefont {Zancanaro}}, \bibinfo {author} {\bibfnamefont {Wolfgang}\ \bibnamefont {Wenzel}}, \bibinfo {author} {\bibfnamefont {Andrea~C}\ \bibnamefont {Ferrari}}, \bibinfo {author} {\bibfnamefont {Andrey}\ \bibnamefont {Ustyuzhanin}}, \ and\ \bibinfo {author} {\bibfnamefont {Stephan}\ \bibnamefont {Roche}},\ }\bibfield  {title} {\enquote {\bibinfo {title} {Artificial intelligence for advanced functional materials: exploring
  current and future directions},}\ }\href {\doibase 10.1088/2515-7639/adc29d} {\bibfield  {journal} {\bibinfo  {journal} {Journal of Physics: Materials}\ }\textbf {\bibinfo {volume} {8}},\ \bibinfo {pages} {021001} (\bibinfo {year} {2025})}\BibitemShut {NoStop}%
\bibitem [{\citenamefont {Moss}\ \emph {et~al.}(2025{\natexlab{a}})\citenamefont {Moss}, \citenamefont {Wiersema}, \citenamefont {Hibat-Allah}, \citenamefont {Carrasquilla},\ and\ \citenamefont {Melko}}]{6ccd-wzhz}%
  \BibitemOpen
  \bibfield  {author} {\bibinfo {author} {\bibfnamefont {M.~Schuyler}\ \bibnamefont {Moss}}, \bibinfo {author} {\bibfnamefont {Roeland}\ \bibnamefont {Wiersema}}, \bibinfo {author} {\bibfnamefont {Mohamed}\ \bibnamefont {Hibat-Allah}}, \bibinfo {author} {\bibfnamefont {Juan}\ \bibnamefont {Carrasquilla}}, \ and\ \bibinfo {author} {\bibfnamefont {Roger~G.}\ \bibnamefont {Melko}},\ }\bibfield  {title} {\enquote {\bibinfo {title} {Leveraging recurrence in neural network wavefunctions for large-scale simulations of heisenberg antiferromagnets on the square lattice},}\ }\href {\doibase 10.1103/6ccd-wzhz} {\bibfield  {journal} {\bibinfo  {journal} {Phys. Rev. B}\ }\textbf {\bibinfo {volume} {112}},\ \bibinfo {pages} {134450} (\bibinfo {year} {2025}{\natexlab{a}})}\BibitemShut {NoStop}%
\bibitem [{\citenamefont {Moss}\ \emph {et~al.}(2025{\natexlab{b}})\citenamefont {Moss}, \citenamefont {Wiersema}, \citenamefont {Hibat-Allah}, \citenamefont {Carrasquilla},\ and\ \citenamefont {Melko}}]{nh89-6jmf}%
  \BibitemOpen
  \bibfield  {author} {\bibinfo {author} {\bibfnamefont {M.~Schuyler}\ \bibnamefont {Moss}}, \bibinfo {author} {\bibfnamefont {Roeland}\ \bibnamefont {Wiersema}}, \bibinfo {author} {\bibfnamefont {Mohamed}\ \bibnamefont {Hibat-Allah}}, \bibinfo {author} {\bibfnamefont {Juan}\ \bibnamefont {Carrasquilla}}, \ and\ \bibinfo {author} {\bibfnamefont {Roger~G.}\ \bibnamefont {Melko}},\ }\bibfield  {title} {\enquote {\bibinfo {title} {Leveraging recurrence in neural network wavefunctions for large-scale simulations of heisenberg antiferromagnets on the triangular lattice},}\ }\href {\doibase 10.1103/nh89-6jmf} {\bibfield  {journal} {\bibinfo  {journal} {Phys. Rev. B}\ }\textbf {\bibinfo {volume} {112}},\ \bibinfo {pages} {134449} (\bibinfo {year} {2025}{\natexlab{b}})}\BibitemShut {NoStop}%
\bibitem [{\citenamefont {Aboussalah}\ \emph {et~al.}(2023)\citenamefont {Aboussalah}, \citenamefont {Kwon}, \citenamefont {Patel}, \citenamefont {Chi},\ and\ \citenamefont {Lee}}]{Aboussalah2023RIM}%
  \BibitemOpen
  \bibfield  {author} {\bibinfo {author} {\bibfnamefont {Amine~Mohamed}\ \bibnamefont {Aboussalah}}, \bibinfo {author} {\bibfnamefont {Min-Jae}\ \bibnamefont {Kwon}}, \bibinfo {author} {\bibfnamefont {Raj~G.}\ \bibnamefont {Patel}}, \bibinfo {author} {\bibfnamefont {Cheng}\ \bibnamefont {Chi}}, \ and\ \bibinfo {author} {\bibfnamefont {Chi-Guhn}\ \bibnamefont {Lee}},\ }\bibfield  {title} {\enquote {\bibinfo {title} {Recursive time series data augmentation},}\ }in\ \href {https://openreview.net/forum?id=5lgD4vU-l24s} {\emph {\bibinfo {booktitle} {The Eleventh International Conference on Learning Representations}}}\ (\bibinfo {year} {2023})\BibitemShut {NoStop}%
\bibitem [{\citenamefont {Hibat-Allah}\ \emph {et~al.}(2024)\citenamefont {Hibat-Allah}, \citenamefont {Melko},\ and\ \citenamefont {Carrasquilla}}]{hibatallah2024supplementingrecurrentneuralnetwork}%
  \BibitemOpen
  \bibfield  {author} {\bibinfo {author} {\bibfnamefont {Mohamed}\ \bibnamefont {Hibat-Allah}}, \bibinfo {author} {\bibfnamefont {Roger~G.}\ \bibnamefont {Melko}}, \ and\ \bibinfo {author} {\bibfnamefont {Juan}\ \bibnamefont {Carrasquilla}},\ }\href {https://arxiv.org/abs/2207.14314} {\enquote {\bibinfo {title} {Supplementing recurrent neural network wave functions with symmetry and annealing to improve accuracy},}\ } (\bibinfo {year} {2024}),\ \Eprint {http://arxiv.org/abs/2207.14314} {arXiv:2207.14314 [cond-mat.dis-nn]} \BibitemShut {NoStop}%
\bibitem [{\citenamefont {Hibat-Allah}\ \emph {et~al.}(2025)\citenamefont {Hibat-Allah}, \citenamefont {Merali}, \citenamefont {Torlai}, \citenamefont {Melko},\ and\ \citenamefont {Carrasquilla}}]{Hibat_Allah_2025}%
  \BibitemOpen
  \bibfield  {author} {\bibinfo {author} {\bibfnamefont {Mohamed}\ \bibnamefont {Hibat-Allah}}, \bibinfo {author} {\bibfnamefont {Ejaaz}\ \bibnamefont {Merali}}, \bibinfo {author} {\bibfnamefont {Giacomo}\ \bibnamefont {Torlai}}, \bibinfo {author} {\bibfnamefont {Roger~G.}\ \bibnamefont {Melko}}, \ and\ \bibinfo {author} {\bibfnamefont {Juan}\ \bibnamefont {Carrasquilla}},\ }\bibfield  {title} {\enquote {\bibinfo {title} {Recurrent neural network wave functions for rydberg atom arrays on kagome lattice},}\ }\href {\doibase 10.1038/s42005-025-02226-7} {\bibfield  {journal} {\bibinfo  {journal} {Communications Physics}\ }\textbf {\bibinfo {volume} {8}} (\bibinfo {year} {2025}),\ 10.1038/s42005-025-02226-7}\BibitemShut {NoStop}%
\bibitem [{\citenamefont {Yang}\ \emph {et~al.}(2024)\citenamefont {Yang}, \citenamefont {Soleimanifar}, \citenamefont {Bergamaschi},\ and\ \citenamefont {Preskill}}]{YangPreskill2024-ClusterStatePaper}%
  \BibitemOpen
  \bibfield  {author} {\bibinfo {author} {\bibfnamefont {Tai-Hsuan}\ \bibnamefont {Yang}}, \bibinfo {author} {\bibfnamefont {Mehdi}\ \bibnamefont {Soleimanifar}}, \bibinfo {author} {\bibfnamefont {Thiago}\ \bibnamefont {Bergamaschi}}, \ and\ \bibinfo {author} {\bibfnamefont {John}\ \bibnamefont {Preskill}},\ }\bibfield  {title} {\enquote {\bibinfo {title} {When can classical neural networks represent quantum states?}}\ }\href@noop {} {\  (\bibinfo {year} {2024})},\ \Eprint {http://arxiv.org/abs/2410.23152} {arXiv:2410.23152 [quant-ph]} \BibitemShut {NoStop}%
\bibitem [{\citenamefont {Döschl}\ and\ \citenamefont {Bohrdt}(2025)}]{doschl2025importancecorrelationsneuralquantum}%
  \BibitemOpen
  \bibfield  {author} {\bibinfo {author} {\bibfnamefont {Fabian}\ \bibnamefont {Döschl}}\ and\ \bibinfo {author} {\bibfnamefont {Annabelle}\ \bibnamefont {Bohrdt}},\ }\href {https://arxiv.org/abs/2508.14152} {\enquote {\bibinfo {title} {Importance of correlations for neural quantum states},}\ } (\bibinfo {year} {2025}),\ \Eprint {http://arxiv.org/abs/2508.14152} {arXiv:2508.14152 [quant-ph]} \BibitemShut {NoStop}%
\bibitem [{\citenamefont {Aboussalah}\ and\ \citenamefont {Ed-dib}(2025{\natexlab{a}})}]{Aboussalah2025GNNTopology}%
  \BibitemOpen
  \bibfield  {author} {\bibinfo {author} {\bibfnamefont {Amine~Mohamed}\ \bibnamefont {Aboussalah}}\ and\ \bibinfo {author} {\bibfnamefont {Abdessalam}\ \bibnamefont {Ed-dib}},\ }\bibfield  {title} {\enquote {\bibinfo {title} {Are {GNNs} doomed by the topology of their input graph?}}\ }\href {\doibase 10.48550/arXiv.2502.17739} {\bibfield  {journal} {\bibinfo  {journal} {arXiv preprint arXiv:2502.17739}\ } (\bibinfo {year} {2025}{\natexlab{a}}),\ 10.48550/arXiv.2502.17739}\BibitemShut {NoStop}%
\bibitem [{\citenamefont {Chang}\ \emph {et~al.}(2017)\citenamefont {Chang}, \citenamefont {Zhang}, \citenamefont {Han}, \citenamefont {Yu}, \citenamefont {Guo}, \citenamefont {Tan}, \citenamefont {Cui}, \citenamefont {Witbrock}, \citenamefont {Hasegawa-Johnson},\ and\ \citenamefont {Huang}}]{Chang2017DilatedRNN}%
  \BibitemOpen
  \bibfield  {author} {\bibinfo {author} {\bibfnamefont {Shiyu}\ \bibnamefont {Chang}}, \bibinfo {author} {\bibfnamefont {Yang}\ \bibnamefont {Zhang}}, \bibinfo {author} {\bibfnamefont {Wei}\ \bibnamefont {Han}}, \bibinfo {author} {\bibfnamefont {Mo}~\bibnamefont {Yu}}, \bibinfo {author} {\bibfnamefont {Xiaoxiao}\ \bibnamefont {Guo}}, \bibinfo {author} {\bibfnamefont {Wei}\ \bibnamefont {Tan}}, \bibinfo {author} {\bibfnamefont {Xiaodong}\ \bibnamefont {Cui}}, \bibinfo {author} {\bibfnamefont {Michael}\ \bibnamefont {Witbrock}}, \bibinfo {author} {\bibfnamefont {Mark}\ \bibnamefont {Hasegawa-Johnson}}, \ and\ \bibinfo {author} {\bibfnamefont {Thomas~S.}\ \bibnamefont {Huang}},\ }\href {https://arxiv.org/abs/1710.02224} {\enquote {\bibinfo {title} {Dilated recurrent neural networks},}\ } (\bibinfo {year} {2017}),\ \Eprint {http://arxiv.org/abs/1710.02224} {arXiv:1710.02224 [cs.AI]} \BibitemShut {NoStop}%
\bibitem [{\citenamefont {Hibat-Allah}\ \emph {et~al.}(2021)\citenamefont {Hibat-Allah}, \citenamefont {Inack}, \citenamefont {Wiersema}, \citenamefont {Melko},\ and\ \citenamefont {Carrasquilla}}]{Hibat-Allah2021-VNA}%
  \BibitemOpen
  \bibfield  {author} {\bibinfo {author} {\bibfnamefont {Mohamed}\ \bibnamefont {Hibat-Allah}}, \bibinfo {author} {\bibfnamefont {Estelle~M}\ \bibnamefont {Inack}}, \bibinfo {author} {\bibfnamefont {Roeland}\ \bibnamefont {Wiersema}}, \bibinfo {author} {\bibfnamefont {Roger~G}\ \bibnamefont {Melko}}, \ and\ \bibinfo {author} {\bibfnamefont {Juan}\ \bibnamefont {Carrasquilla}},\ }\bibfield  {title} {\enquote {\bibinfo {title} {Variational neural annealing},}\ }\href@noop {} {\bibfield  {journal} {\bibinfo  {journal} {Nat. Mach. Intell.}\ }\textbf {\bibinfo {volume} {3}},\ \bibinfo {pages} {952--961} (\bibinfo {year} {2021})}\BibitemShut {NoStop}%
\bibitem [{\citenamefont {Ahsan~Khandoker}\ \emph {et~al.}(2023)\citenamefont {Ahsan~Khandoker}, \citenamefont {Munshad~Abedin},\ and\ \citenamefont {Hibat-Allah}}]{Khandoker_2023}%
  \BibitemOpen
  \bibfield  {author} {\bibinfo {author} {\bibfnamefont {Shoummo}\ \bibnamefont {Ahsan~Khandoker}}, \bibinfo {author} {\bibfnamefont {Jawaril}\ \bibnamefont {Munshad~Abedin}}, \ and\ \bibinfo {author} {\bibfnamefont {Mohamed}\ \bibnamefont {Hibat-Allah}},\ }\bibfield  {title} {\enquote {\bibinfo {title} {Supplementing recurrent neural networks with annealing to solve combinatorial optimization problems},}\ }\href {\doibase 10.1088/2632-2153/acb895} {\bibfield  {journal} {\bibinfo  {journal} {Machine Learning: Science and Technology}\ }\textbf {\bibinfo {volume} {4}},\ \bibinfo {pages} {015026} (\bibinfo {year} {2023})}\BibitemShut {NoStop}%
\bibitem [{\citenamefont {Shen}(2019)}]{shen2019mutualinformationscalingexpressive}%
  \BibitemOpen
  \bibfield  {author} {\bibinfo {author} {\bibfnamefont {Huitao}\ \bibnamefont {Shen}},\ }\href {https://arxiv.org/abs/1905.04271} {\enquote {\bibinfo {title} {Mutual information scaling and expressive power of sequence models},}\ } (\bibinfo {year} {2019}),\ \Eprint {http://arxiv.org/abs/1905.04271} {arXiv:1905.04271 [cs.LG]} \BibitemShut {NoStop}%
\bibitem [{\citenamefont {McNaughton}\ and\ \citenamefont {Hibat-Allah}(2025)}]{McNaughton2025-ku}%
  \BibitemOpen
  \bibfield  {author} {\bibinfo {author} {\bibfnamefont {Jake}\ \bibnamefont {McNaughton}}\ and\ \bibinfo {author} {\bibfnamefont {Mohamed}\ \bibnamefont {Hibat-Allah}},\ }\bibfield  {title} {\enquote {\bibinfo {title} {Adaptive neural quantum states: A recurrent neural network perspective},}\ }\href@noop {} {\  (\bibinfo {year} {2025})},\ \Eprint {http://arxiv.org/abs/2507.18700} {arXiv:2507.18700 [cond-mat.dis-nn]} \BibitemShut {NoStop}%
\bibitem [{\citenamefont {Aboussalah}\ and\ \citenamefont {Ed-dib}(2025{\natexlab{b}})}]{Aboussalah2025GeoHNN}%
  \BibitemOpen
  \bibfield  {author} {\bibinfo {author} {\bibfnamefont {Amine~Mohamed}\ \bibnamefont {Aboussalah}}\ and\ \bibinfo {author} {\bibfnamefont {Abdessalam}\ \bibnamefont {Ed-dib}},\ }\bibfield  {title} {\enquote {\bibinfo {title} {{GeoHNNs}: Geometric hamiltonian neural networks},}\ }\href {\doibase 10.48550/arXiv.2507.15678} {\bibfield  {journal} {\bibinfo  {journal} {arXiv preprint arXiv:2507.15678}\ } (\bibinfo {year} {2025}{\natexlab{b}}),\ 10.48550/arXiv.2507.15678}\BibitemShut {NoStop}%
\bibitem [{\citenamefont {Becca}\ and\ \citenamefont {Sorella}(2017)}]{becca_sorella_2017}%
  \BibitemOpen
  \bibfield  {author} {\bibinfo {author} {\bibfnamefont {Federico}\ \bibnamefont {Becca}}\ and\ \bibinfo {author} {\bibfnamefont {Sandro}\ \bibnamefont {Sorella}},\ }\href {\doibase 10.1017/9781316417041} {\emph {\bibinfo {title} {Quantum Monte Carlo Approaches for Correlated Systems}}}\ (\bibinfo  {publisher} {Cambridge University Press},\ \bibinfo {year} {2017})\BibitemShut {NoStop}%
\bibitem [{\citenamefont {Germain}\ \emph {et~al.}(2015)\citenamefont {Germain}, \citenamefont {Gregor}, \citenamefont {Murray},\ and\ \citenamefont {Larochelle}}]{germain2015mademaskedautoencoderdistribution}%
  \BibitemOpen
  \bibfield  {author} {\bibinfo {author} {\bibfnamefont {Mathieu}\ \bibnamefont {Germain}}, \bibinfo {author} {\bibfnamefont {Karol}\ \bibnamefont {Gregor}}, \bibinfo {author} {\bibfnamefont {Iain}\ \bibnamefont {Murray}}, \ and\ \bibinfo {author} {\bibfnamefont {Hugo}\ \bibnamefont {Larochelle}},\ }\href {https://arxiv.org/abs/1502.03509} {\enquote {\bibinfo {title} {Made: Masked autoencoder for distribution estimation},}\ } (\bibinfo {year} {2015}),\ \Eprint {http://arxiv.org/abs/1502.03509} {arXiv:1502.03509 [cs.LG]} \BibitemShut {NoStop}%
\bibitem [{\citenamefont {Uria}\ \emph {et~al.}(2016)\citenamefont {Uria}, \citenamefont {Côté}, \citenamefont {Gregor}, \citenamefont {Murray},\ and\ \citenamefont {Larochelle}}]{uria2016neuralautoregressivedistributionestimation}%
  \BibitemOpen
  \bibfield  {author} {\bibinfo {author} {\bibfnamefont {Benigno}\ \bibnamefont {Uria}}, \bibinfo {author} {\bibfnamefont {Marc-Alexandre}\ \bibnamefont {Côté}}, \bibinfo {author} {\bibfnamefont {Karol}\ \bibnamefont {Gregor}}, \bibinfo {author} {\bibfnamefont {Iain}\ \bibnamefont {Murray}}, \ and\ \bibinfo {author} {\bibfnamefont {Hugo}\ \bibnamefont {Larochelle}},\ }\href {https://arxiv.org/abs/1605.02226} {\enquote {\bibinfo {title} {Neural autoregressive distribution estimation},}\ } (\bibinfo {year} {2016}),\ \Eprint {http://arxiv.org/abs/1605.02226} {arXiv:1605.02226 [cs.LG]} \BibitemShut {NoStop}%
\bibitem [{\citenamefont {Wu}\ \emph {et~al.}(2019)\citenamefont {Wu}, \citenamefont {Wang},\ and\ \citenamefont {Zhang}}]{Wu_2019}%
  \BibitemOpen
  \bibfield  {author} {\bibinfo {author} {\bibfnamefont {Dian}\ \bibnamefont {Wu}}, \bibinfo {author} {\bibfnamefont {Lei}\ \bibnamefont {Wang}}, \ and\ \bibinfo {author} {\bibfnamefont {Pan}\ \bibnamefont {Zhang}},\ }\bibfield  {title} {\enquote {\bibinfo {title} {Solving statistical mechanics using variational autoregressive networks},}\ }\href {\doibase 10.1103/physrevlett.122.080602} {\bibfield  {journal} {\bibinfo  {journal} {Physical Review Letters}\ }\textbf {\bibinfo {volume} {122}} (\bibinfo {year} {2019}),\ 10.1103/physrevlett.122.080602}\BibitemShut {NoStop}%
\bibitem [{\citenamefont {Sharir}\ \emph {et~al.}(2020)\citenamefont {Sharir}, \citenamefont {Levine}, \citenamefont {Wies}, \citenamefont {Carleo},\ and\ \citenamefont {Shashua}}]{PhysRevLett.124.020503}%
  \BibitemOpen
  \bibfield  {author} {\bibinfo {author} {\bibfnamefont {Or}~\bibnamefont {Sharir}}, \bibinfo {author} {\bibfnamefont {Yoav}\ \bibnamefont {Levine}}, \bibinfo {author} {\bibfnamefont {Noam}\ \bibnamefont {Wies}}, \bibinfo {author} {\bibfnamefont {Giuseppe}\ \bibnamefont {Carleo}}, \ and\ \bibinfo {author} {\bibfnamefont {Amnon}\ \bibnamefont {Shashua}},\ }\bibfield  {title} {\enquote {\bibinfo {title} {Deep autoregressive models for the efficient variational simulation of many-body quantum systems},}\ }\href {\doibase 10.1103/PhysRevLett.124.020503} {\bibfield  {journal} {\bibinfo  {journal} {Phys. Rev. Lett.}\ }\textbf {\bibinfo {volume} {124}},\ \bibinfo {pages} {020503} (\bibinfo {year} {2020})}\BibitemShut {NoStop}%
\bibitem [{\citenamefont {Cho}\ \emph {et~al.}(2014)\citenamefont {Cho}, \citenamefont {van Merrienboer}, \citenamefont {Gulcehre}, \citenamefont {Bahdanau}, \citenamefont {Bougares}, \citenamefont {Schwenk},\ and\ \citenamefont {Bengio}}]{cho2014learningphraserepresentationsusing}%
  \BibitemOpen
  \bibfield  {author} {\bibinfo {author} {\bibfnamefont {Kyunghyun}\ \bibnamefont {Cho}}, \bibinfo {author} {\bibfnamefont {Bart}\ \bibnamefont {van Merrienboer}}, \bibinfo {author} {\bibfnamefont {Caglar}\ \bibnamefont {Gulcehre}}, \bibinfo {author} {\bibfnamefont {Dzmitry}\ \bibnamefont {Bahdanau}}, \bibinfo {author} {\bibfnamefont {Fethi}\ \bibnamefont {Bougares}}, \bibinfo {author} {\bibfnamefont {Holger}\ \bibnamefont {Schwenk}}, \ and\ \bibinfo {author} {\bibfnamefont {Yoshua}\ \bibnamefont {Bengio}},\ }\href {https://arxiv.org/abs/1406.1078} {\enquote {\bibinfo {title} {Learning phrase representations using rnn encoder-decoder for statistical machine translation},}\ } (\bibinfo {year} {2014}),\ \Eprint {http://arxiv.org/abs/1406.1078} {arXiv:1406.1078 [cs.CL]} \BibitemShut {NoStop}%
\bibitem [{\citenamefont {Sak}\ \emph {et~al.}(2014)\citenamefont {Sak}, \citenamefont {Senior},\ and\ \citenamefont {Beaufays}}]{sak2014longshorttermmemorybased}%
  \BibitemOpen
  \bibfield  {author} {\bibinfo {author} {\bibfnamefont {Haşim}\ \bibnamefont {Sak}}, \bibinfo {author} {\bibfnamefont {Andrew}\ \bibnamefont {Senior}}, \ and\ \bibinfo {author} {\bibfnamefont {Françoise}\ \bibnamefont {Beaufays}},\ }\href {https://arxiv.org/abs/1402.1128} {\enquote {\bibinfo {title} {Long short-term memory based recurrent neural network architectures for large vocabulary speech recognition},}\ } (\bibinfo {year} {2014}),\ \Eprint {http://arxiv.org/abs/1402.1128} {arXiv:1402.1128 [cs.NE]} \BibitemShut {NoStop}%
\bibitem [{\citenamefont {Sutskever}\ \emph {et~al.}(2014)\citenamefont {Sutskever}, \citenamefont {Vinyals},\ and\ \citenamefont {Le}}]{sutskever2014sequencesequencelearningneural}%
  \BibitemOpen
  \bibfield  {author} {\bibinfo {author} {\bibfnamefont {Ilya}\ \bibnamefont {Sutskever}}, \bibinfo {author} {\bibfnamefont {Oriol}\ \bibnamefont {Vinyals}}, \ and\ \bibinfo {author} {\bibfnamefont {Quoc~V.}\ \bibnamefont {Le}},\ }\href {https://arxiv.org/abs/1409.3215} {\enquote {\bibinfo {title} {Sequence to sequence learning with neural networks},}\ } (\bibinfo {year} {2014}),\ \Eprint {http://arxiv.org/abs/1409.3215} {arXiv:1409.3215 [cs.CL]} \BibitemShut {NoStop}%
\bibitem [{\citenamefont {Sch{\"a}fer}\ and\ \citenamefont {Zimmermann}(2006)}]{Schafer2006-RNN-approximators}%
  \BibitemOpen
  \bibfield  {author} {\bibinfo {author} {\bibfnamefont {Anton~Maximilian}\ \bibnamefont {Sch{\"a}fer}}\ and\ \bibinfo {author} {\bibfnamefont {Hans~Georg}\ \bibnamefont {Zimmermann}},\ }\bibfield  {title} {\enquote {\bibinfo {title} {Recurrent neural networks are universal approximators},}\ }in\ \href@noop {} {\emph {\bibinfo {booktitle} {Artificial Neural Networks -- {ICANN} 2006}}},\ \bibinfo {series and number} {Lecture notes in computer science}\ (\bibinfo  {publisher} {Springer Berlin Heidelberg},\ \bibinfo {address} {Berlin, Heidelberg},\ \bibinfo {year} {2006})\ pp.\ \bibinfo {pages} {632--640}\BibitemShut {NoStop}%
\bibitem [{\citenamefont {Lipton}\ \emph {et~al.}(2015)\citenamefont {Lipton}, \citenamefont {Berkowitz},\ and\ \citenamefont {Elkan}}]{lipton2015criticalreviewrecurrentneural}%
  \BibitemOpen
  \bibfield  {author} {\bibinfo {author} {\bibfnamefont {Zachary~C.}\ \bibnamefont {Lipton}}, \bibinfo {author} {\bibfnamefont {John}\ \bibnamefont {Berkowitz}}, \ and\ \bibinfo {author} {\bibfnamefont {Charles}\ \bibnamefont {Elkan}},\ }\href {https://arxiv.org/abs/1506.00019} {\enquote {\bibinfo {title} {A critical review of recurrent neural networks for sequence learning},}\ } (\bibinfo {year} {2015}),\ \Eprint {http://arxiv.org/abs/1506.00019} {arXiv:1506.00019 [cs.LG]} \BibitemShut {NoStop}%
\bibitem [{\citenamefont {Bravyi}(2015)}]{bravyi2015montecarlosimulationstoquastic}%
  \BibitemOpen
  \bibfield  {author} {\bibinfo {author} {\bibfnamefont {Sergey}\ \bibnamefont {Bravyi}},\ }\href {https://arxiv.org/abs/1402.2295} {\enquote {\bibinfo {title} {Monte carlo simulation of stoquastic hamiltonians},}\ } (\bibinfo {year} {2015}),\ \Eprint {http://arxiv.org/abs/1402.2295} {arXiv:1402.2295 [quant-ph]} \BibitemShut {NoStop}%
\bibitem [{\citenamefont {Zhou}\ \emph {et~al.}(2016)\citenamefont {Zhou}, \citenamefont {Wu}, \citenamefont {Zhang},\ and\ \citenamefont {Zhou}}]{zhou2016minimal}%
  \BibitemOpen
  \bibfield  {author} {\bibinfo {author} {\bibfnamefont {Guo-Bing}\ \bibnamefont {Zhou}}, \bibinfo {author} {\bibfnamefont {Jianxin}\ \bibnamefont {Wu}}, \bibinfo {author} {\bibfnamefont {Chen-Lin}\ \bibnamefont {Zhang}}, \ and\ \bibinfo {author} {\bibfnamefont {Zhi-Hua}\ \bibnamefont {Zhou}},\ }\bibfield  {title} {\enquote {\bibinfo {title} {Minimal gated unit for recurrent neural networks},}\ }\href {\doibase 10.1007/s11633-016-1006-2} {\bibfield  {journal} {\bibinfo  {journal} {International Journal of Automation and Computing}\ }\textbf {\bibinfo {volume} {13}},\ \bibinfo {pages} {226--234} (\bibinfo {year} {2016})}\BibitemShut {NoStop}%
\bibitem [{\citenamefont {Khandoker}\ \emph {et~al.}(2025)\citenamefont {Khandoker}, \citenamefont {Inack},\ and\ \citenamefont {Hibat-Allah}}]{Khandoker_2025}%
  \BibitemOpen
  \bibfield  {author} {\bibinfo {author} {\bibfnamefont {Shoummo~A}\ \bibnamefont {Khandoker}}, \bibinfo {author} {\bibfnamefont {Estelle~M}\ \bibnamefont {Inack}}, \ and\ \bibinfo {author} {\bibfnamefont {Mohamed}\ \bibnamefont {Hibat-Allah}},\ }\bibfield  {title} {\enquote {\bibinfo {title} {Lattice protein folding with variational annealing},}\ }\href {\doibase 10.1088/2632-2153/adf376} {\bibfield  {journal} {\bibinfo  {journal} {Machine Learning: Science and Technology}\ }\textbf {\bibinfo {volume} {6}},\ \bibinfo {pages} {035023} (\bibinfo {year} {2025})}\BibitemShut {NoStop}%
\bibitem [{\citenamefont {Aboussalah}\ and\ \citenamefont {Lee}(2020)}]{Aboussalah2020SDDRRL}%
  \BibitemOpen
  \bibfield  {author} {\bibinfo {author} {\bibfnamefont {Amine~Mohamed}\ \bibnamefont {Aboussalah}}\ and\ \bibinfo {author} {\bibfnamefont {Chi-Guhn}\ \bibnamefont {Lee}},\ }\bibfield  {title} {\enquote {\bibinfo {title} {Continuous control with {Stacked Deep Dynamic Recurrent Reinforcement Learning} for portfolio optimization},}\ }\href {\doibase 10.1016/j.eswa.2019.112891} {\bibfield  {journal} {\bibinfo  {journal} {Expert Systems with Applications}\ }\textbf {\bibinfo {volume} {140}},\ \bibinfo {pages} {112891} (\bibinfo {year} {2020})}\BibitemShut {NoStop}%
\bibitem [{\citenamefont {Koutník}\ \emph {et~al.}(2014)\citenamefont {Koutník}, \citenamefont {Greff}, \citenamefont {Gomez},\ and\ \citenamefont {Schmidhuber}}]{koutnik2014clockworkrnn}%
  \BibitemOpen
  \bibfield  {author} {\bibinfo {author} {\bibfnamefont {Jan}\ \bibnamefont {Koutník}}, \bibinfo {author} {\bibfnamefont {Klaus}\ \bibnamefont {Greff}}, \bibinfo {author} {\bibfnamefont {Faustino}\ \bibnamefont {Gomez}}, \ and\ \bibinfo {author} {\bibfnamefont {Jürgen}\ \bibnamefont {Schmidhuber}},\ }\href {https://arxiv.org/abs/1402.3511} {\enquote {\bibinfo {title} {A clockwork rnn},}\ } (\bibinfo {year} {2014}),\ \Eprint {http://arxiv.org/abs/1402.3511} {arXiv:1402.3511 [cs.NE]} \BibitemShut {NoStop}%
\bibitem [{\citenamefont {Chung}\ \emph {et~al.}(2017)\citenamefont {Chung}, \citenamefont {Ahn},\ and\ \citenamefont {Bengio}}]{chung2017hierarchicalmultiscalerecurrentneural}%
  \BibitemOpen
  \bibfield  {author} {\bibinfo {author} {\bibfnamefont {Junyoung}\ \bibnamefont {Chung}}, \bibinfo {author} {\bibfnamefont {Sungjin}\ \bibnamefont {Ahn}}, \ and\ \bibinfo {author} {\bibfnamefont {Yoshua}\ \bibnamefont {Bengio}},\ }\href {https://arxiv.org/abs/1609.01704} {\enquote {\bibinfo {title} {Hierarchical multiscale recurrent neural networks},}\ } (\bibinfo {year} {2017}),\ \Eprint {http://arxiv.org/abs/1609.01704} {arXiv:1609.01704 [cs.LG]} \BibitemShut {NoStop}%
\bibitem [{\citenamefont {Vaswani}\ \emph {et~al.}(2017)\citenamefont {Vaswani}, \citenamefont {Shazeer}, \citenamefont {Parmar}, \citenamefont {Uszkoreit}, \citenamefont {Jones}, \citenamefont {Gomez}, \citenamefont {Kaiser},\ and\ \citenamefont {Polosukhin}}]{NIPS2017_3f5ee243}%
  \BibitemOpen
  \bibfield  {author} {\bibinfo {author} {\bibfnamefont {Ashish}\ \bibnamefont {Vaswani}}, \bibinfo {author} {\bibfnamefont {Noam}\ \bibnamefont {Shazeer}}, \bibinfo {author} {\bibfnamefont {Niki}\ \bibnamefont {Parmar}}, \bibinfo {author} {\bibfnamefont {Jakob}\ \bibnamefont {Uszkoreit}}, \bibinfo {author} {\bibfnamefont {Llion}\ \bibnamefont {Jones}}, \bibinfo {author} {\bibfnamefont {Aidan~N}\ \bibnamefont {Gomez}}, \bibinfo {author} {\bibfnamefont {\L~ukasz}\ \bibnamefont {Kaiser}}, \ and\ \bibinfo {author} {\bibfnamefont {Illia}\ \bibnamefont {Polosukhin}},\ }\bibfield  {title} {\enquote {\bibinfo {title} {Attention is all you need},}\ }in\ \href {https://proceedings.neurips.cc/paper_files/paper/2017/file/3f5ee243547dee91fbd053c1c4a845aa-Paper.pdf} {\emph {\bibinfo {booktitle} {Advances in Neural Information Processing Systems}}},\ Vol.~\bibinfo {volume} {30},\ \bibinfo {editor} {edited by\ \bibinfo {editor} {\bibfnamefont {I.}~\bibnamefont {Guyon}}, \bibinfo {editor} {\bibfnamefont {U.~Von}\
  \bibnamefont {Luxburg}}, \bibinfo {editor} {\bibfnamefont {S.}~\bibnamefont {Bengio}}, \bibinfo {editor} {\bibfnamefont {H.}~\bibnamefont {Wallach}}, \bibinfo {editor} {\bibfnamefont {R.}~\bibnamefont {Fergus}}, \bibinfo {editor} {\bibfnamefont {S.}~\bibnamefont {Vishwanathan}}, \ and\ \bibinfo {editor} {\bibfnamefont {R.}~\bibnamefont {Garnett}}}\ (\bibinfo  {publisher} {Curran Associates, Inc.},\ \bibinfo {year} {2017})\BibitemShut {NoStop}%
\bibitem [{\citenamefont {Pascanu}\ \emph {et~al.}(2013)\citenamefont {Pascanu}, \citenamefont {Mikolov},\ and\ \citenamefont {Bengio}}]{pascanu2013difficultytrainingrecurrentneural}%
  \BibitemOpen
  \bibfield  {author} {\bibinfo {author} {\bibfnamefont {Razvan}\ \bibnamefont {Pascanu}}, \bibinfo {author} {\bibfnamefont {Tomas}\ \bibnamefont {Mikolov}}, \ and\ \bibinfo {author} {\bibfnamefont {Yoshua}\ \bibnamefont {Bengio}},\ }\href {https://arxiv.org/abs/1211.5063} {\enquote {\bibinfo {title} {On the difficulty of training recurrent neural networks},}\ } (\bibinfo {year} {2013}),\ \Eprint {http://arxiv.org/abs/1211.5063} {arXiv:1211.5063 [cs.LG]} \BibitemShut {NoStop}%
\bibitem [{\citenamefont {Vidal}(2008)}]{Vidal_2008}%
  \BibitemOpen
  \bibfield  {author} {\bibinfo {author} {\bibfnamefont {G.}~\bibnamefont {Vidal}},\ }\bibfield  {title} {\enquote {\bibinfo {title} {Class of quantum many-body states that can be efficiently simulated},}\ }\href {\doibase 10.1103/physrevlett.101.110501} {\bibfield  {journal} {\bibinfo  {journal} {Physical Review Letters}\ }\textbf {\bibinfo {volume} {101}} (\bibinfo {year} {2008}),\ 10.1103/physrevlett.101.110501}\BibitemShut {NoStop}%
\bibitem [{\citenamefont {Kingma}\ and\ \citenamefont {Ba}(2017)}]{kingma2017adammethodstochasticoptimization}%
  \BibitemOpen
  \bibfield  {author} {\bibinfo {author} {\bibfnamefont {Diederik~P.}\ \bibnamefont {Kingma}}\ and\ \bibinfo {author} {\bibfnamefont {Jimmy}\ \bibnamefont {Ba}},\ }\href {https://arxiv.org/abs/1412.6980} {\enquote {\bibinfo {title} {Adam: A method for stochastic optimization},}\ } (\bibinfo {year} {2017}),\ \Eprint {http://arxiv.org/abs/1412.6980} {arXiv:1412.6980 [cs.LG]} \BibitemShut {NoStop}%
\bibitem [{\citenamefont {Mbeng}\ \emph {et~al.}(2024)\citenamefont {Mbeng}, \citenamefont {Russomanno},\ and\ \citenamefont {Santoro}}]{Mbeng_2024}%
  \BibitemOpen
  \bibfield  {author} {\bibinfo {author} {\bibfnamefont {Glen~Bigan}\ \bibnamefont {Mbeng}}, \bibinfo {author} {\bibfnamefont {Angelo}\ \bibnamefont {Russomanno}}, \ and\ \bibinfo {author} {\bibfnamefont {Giuseppe~E.}\ \bibnamefont {Santoro}},\ }\bibfield  {title} {\enquote {\bibinfo {title} {The quantum ising chain for beginners},}\ }\href {\doibase 10.21468/scipostphyslectnotes.82} {\bibfield  {journal} {\bibinfo  {journal} {SciPost Physics Lecture Notes}\ } (\bibinfo {year} {2024}),\ 10.21468/scipostphyslectnotes.82}\BibitemShut {NoStop}%
\bibitem [{\citenamefont {Sachdev}(2011)}]{Sachdev_2011}%
  \BibitemOpen
  \bibfield  {author} {\bibinfo {author} {\bibfnamefont {Subir}\ \bibnamefont {Sachdev}},\ }\href@noop {} {\emph {\bibinfo {title} {Quantum Phase Transitions}}},\ \bibinfo {edition} {2nd}\ ed.\ (\bibinfo  {publisher} {Cambridge University Press},\ \bibinfo {year} {2011})\BibitemShut {NoStop}%
\bibitem [{\citenamefont {Di~Francesco}\ \emph {et~al.}(1997)\citenamefont {Di~Francesco}, \citenamefont {Mathieu},\ and\ \citenamefont {Sénéchal}}]{di_francesco_conformal_1997}%
  \BibitemOpen
  \bibfield  {author} {\bibinfo {author} {\bibfnamefont {Philippe}\ \bibnamefont {Di~Francesco}}, \bibinfo {author} {\bibfnamefont {Pierre}\ \bibnamefont {Mathieu}}, \ and\ \bibinfo {author} {\bibfnamefont {David}\ \bibnamefont {Sénéchal}},\ }\href {\doibase 10.1007/978-1-4612-2256-9} {\emph {\bibinfo {title} {Conformal {Field} {Theory}}}},\ Graduate {Texts} in {Contemporary} {Physics}\ (\bibinfo  {publisher} {Springer New York},\ \bibinfo {address} {New York, NY},\ \bibinfo {year} {1997})\BibitemShut {NoStop}%
\bibitem [{\citenamefont {Alba}\ \emph {et~al.}(2010)\citenamefont {Alba}, \citenamefont {Tagliacozzo},\ and\ \citenamefont {Calabrese}}]{Alba_2010}%
  \BibitemOpen
  \bibfield  {author} {\bibinfo {author} {\bibfnamefont {Vincenzo}\ \bibnamefont {Alba}}, \bibinfo {author} {\bibfnamefont {Luca}\ \bibnamefont {Tagliacozzo}}, \ and\ \bibinfo {author} {\bibfnamefont {Pasquale}\ \bibnamefont {Calabrese}},\ }\bibfield  {title} {\enquote {\bibinfo {title} {Entanglement entropy of two disjoint blocks in critical ising models},}\ }\href {\doibase 10.1103/physrevb.81.060411} {\bibfield  {journal} {\bibinfo  {journal} {Physical Review B}\ }\textbf {\bibinfo {volume} {81}} (\bibinfo {year} {2010}),\ 10.1103/physrevb.81.060411}\BibitemShut {NoStop}%
\bibitem [{\citenamefont {Calabrese}\ and\ \citenamefont {Cardy}(2004)}]{Pasquale_Calabrese_2004}%
  \BibitemOpen
  \bibfield  {author} {\bibinfo {author} {\bibfnamefont {Pasquale}\ \bibnamefont {Calabrese}}\ and\ \bibinfo {author} {\bibfnamefont {John}\ \bibnamefont {Cardy}},\ }\bibfield  {title} {\enquote {\bibinfo {title} {Entanglement entropy and quantum field theory},}\ }\href {\doibase 10.1088/1742-5468/2004/06/p06002} {\bibfield  {journal} {\bibinfo  {journal} {Journal of Statistical Mechanics: Theory and Experiment}\ }\textbf {\bibinfo {volume} {2004}},\ \bibinfo {pages} {P06002} (\bibinfo {year} {2004})}\BibitemShut {NoStop}%
\bibitem [{\citenamefont {Ju}\ \emph {et~al.}(2012)\citenamefont {Ju}, \citenamefont {Kallin}, \citenamefont {Fendley}, \citenamefont {Hastings},\ and\ \citenamefont {Melko}}]{PhysRevB.85.165121}%
  \BibitemOpen
  \bibfield  {author} {\bibinfo {author} {\bibfnamefont {Hyejin}\ \bibnamefont {Ju}}, \bibinfo {author} {\bibfnamefont {Ann~B.}\ \bibnamefont {Kallin}}, \bibinfo {author} {\bibfnamefont {Paul}\ \bibnamefont {Fendley}}, \bibinfo {author} {\bibfnamefont {Matthew~B.}\ \bibnamefont {Hastings}}, \ and\ \bibinfo {author} {\bibfnamefont {Roger~G.}\ \bibnamefont {Melko}},\ }\bibfield  {title} {\enquote {\bibinfo {title} {Entanglement scaling in two-dimensional gapless systems},}\ }\href {\doibase 10.1103/PhysRevB.85.165121} {\bibfield  {journal} {\bibinfo  {journal} {Phys. Rev. B}\ }\textbf {\bibinfo {volume} {85}},\ \bibinfo {pages} {165121} (\bibinfo {year} {2012})}\BibitemShut {NoStop}%
\bibitem [{\citenamefont {Holzhey}\ \emph {et~al.}(1994)\citenamefont {Holzhey}, \citenamefont {Larsen},\ and\ \citenamefont {Wilczek}}]{HOLZHEY1994443}%
  \BibitemOpen
  \bibfield  {author} {\bibinfo {author} {\bibfnamefont {Christoph}\ \bibnamefont {Holzhey}}, \bibinfo {author} {\bibfnamefont {Finn}\ \bibnamefont {Larsen}}, \ and\ \bibinfo {author} {\bibfnamefont {Frank}\ \bibnamefont {Wilczek}},\ }\bibfield  {title} {\enquote {\bibinfo {title} {Geometric and renormalized entropy in conformal field theory},}\ }\href {\doibase https://doi.org/10.1016/0550-3213(94)90402-2} {\bibfield  {journal} {\bibinfo  {journal} {Nuclear Physics B}\ }\textbf {\bibinfo {volume} {424}},\ \bibinfo {pages} {443--467} (\bibinfo {year} {1994})}\BibitemShut {NoStop}%
\bibitem [{\citenamefont {Korepin}(2004)}]{PhysRevLett.92.096402}%
  \BibitemOpen
  \bibfield  {author} {\bibinfo {author} {\bibfnamefont {V.~E.}\ \bibnamefont {Korepin}},\ }\bibfield  {title} {\enquote {\bibinfo {title} {Universality of entropy scaling in one dimensional gapless models},}\ }\href {\doibase 10.1103/PhysRevLett.92.096402} {\bibfield  {journal} {\bibinfo  {journal} {Phys. Rev. Lett.}\ }\textbf {\bibinfo {volume} {92}},\ \bibinfo {pages} {096402} (\bibinfo {year} {2004})}\BibitemShut {NoStop}%
\bibitem [{\citenamefont {Jreissaty}\ \emph {et~al.}(2026)\citenamefont {Jreissaty}, \citenamefont {Zhang}, \citenamefont {Quijano}, \citenamefont {Carrasquilla},\ and\ \citenamefont {Wiersema}}]{Jreissaty_2026}%
  \BibitemOpen
  \bibfield  {author} {\bibinfo {author} {\bibfnamefont {Andrew}\ \bibnamefont {Jreissaty}}, \bibinfo {author} {\bibfnamefont {Hang}\ \bibnamefont {Zhang}}, \bibinfo {author} {\bibfnamefont {Jairo~C.}\ \bibnamefont {Quijano}}, \bibinfo {author} {\bibfnamefont {Juan}\ \bibnamefont {Carrasquilla}}, \ and\ \bibinfo {author} {\bibfnamefont {Roeland}\ \bibnamefont {Wiersema}},\ }\bibfield  {title} {\enquote {\bibinfo {title} {Entanglement and optimization within autoregressive neural quantum states},}\ }\href {\doibase 10.1103/t2cg-kr7y} {\bibfield  {journal} {\bibinfo  {journal} {Physical Review Research}\ }\textbf {\bibinfo {volume} {8}} (\bibinfo {year} {2026}),\ 10.1103/t2cg-kr7y}\BibitemShut {NoStop}%
\bibitem [{\citenamefont {Rende}\ \emph {et~al.}(2024)\citenamefont {Rende}, \citenamefont {Viteritti}, \citenamefont {Bardone}, \citenamefont {Becca},\ and\ \citenamefont {Goldt}}]{Rende_2024}%
  \BibitemOpen
  \bibfield  {author} {\bibinfo {author} {\bibfnamefont {Riccardo}\ \bibnamefont {Rende}}, \bibinfo {author} {\bibfnamefont {Luciano~Loris}\ \bibnamefont {Viteritti}}, \bibinfo {author} {\bibfnamefont {Lorenzo}\ \bibnamefont {Bardone}}, \bibinfo {author} {\bibfnamefont {Federico}\ \bibnamefont {Becca}}, \ and\ \bibinfo {author} {\bibfnamefont {Sebastian}\ \bibnamefont {Goldt}},\ }\bibfield  {title} {\enquote {\bibinfo {title} {A simple linear algebra identity to optimize large-scale neural network quantum states},}\ }\href {\doibase 10.1038/s42005-024-01732-4} {\bibfield  {journal} {\bibinfo  {journal} {Communications Physics}\ }\textbf {\bibinfo {volume} {7}} (\bibinfo {year} {2024}),\ 10.1038/s42005-024-01732-4}\BibitemShut {NoStop}%
\bibitem [{\citenamefont {Chen}\ and\ \citenamefont {Heyl}(2024)}]{Chen2024}%
  \BibitemOpen
  \bibfield  {author} {\bibinfo {author} {\bibfnamefont {Ao}~\bibnamefont {Chen}}\ and\ \bibinfo {author} {\bibfnamefont {Markus}\ \bibnamefont {Heyl}},\ }\bibfield  {title} {\enquote {\bibinfo {title} {Empowering deep neural quantum states through efficient optimization},}\ }\href {\doibase 10.1038/s41567-024-02566-1} {\bibfield  {journal} {\bibinfo  {journal} {Nature Physics}\ }\textbf {\bibinfo {volume} {20}},\ \bibinfo {pages} {1476--1481} (\bibinfo {year} {2024})}\BibitemShut {NoStop}%
\bibitem [{\citenamefont {Armegioiu}\ \emph {et~al.}(2025)\citenamefont {Armegioiu}, \citenamefont {Carrasquilla}, \citenamefont {Mishra}, \citenamefont {Müller}, \citenamefont {Nys}, \citenamefont {Zeinhofer},\ and\ \citenamefont {Zhang}}]{armegioiu2025functionalneuralwavefunctionoptimization}%
  \BibitemOpen
  \bibfield  {author} {\bibinfo {author} {\bibfnamefont {Victor}\ \bibnamefont {Armegioiu}}, \bibinfo {author} {\bibfnamefont {Juan}\ \bibnamefont {Carrasquilla}}, \bibinfo {author} {\bibfnamefont {Siddhartha}\ \bibnamefont {Mishra}}, \bibinfo {author} {\bibfnamefont {Johannes}\ \bibnamefont {Müller}}, \bibinfo {author} {\bibfnamefont {Jannes}\ \bibnamefont {Nys}}, \bibinfo {author} {\bibfnamefont {Marius}\ \bibnamefont {Zeinhofer}}, \ and\ \bibinfo {author} {\bibfnamefont {Hang}\ \bibnamefont {Zhang}},\ }\href {https://arxiv.org/abs/2507.10835} {\enquote {\bibinfo {title} {Functional neural wavefunction optimization},}\ } (\bibinfo {year} {2025}),\ \Eprint {http://arxiv.org/abs/2507.10835} {arXiv:2507.10835 [cond-mat.str-el]} \BibitemShut {NoStop}%
\bibitem [{\citenamefont {Eisert}\ \emph {et~al.}(2010)\citenamefont {Eisert}, \citenamefont {Cramer},\ and\ \citenamefont {Plenio}}]{Eisert_2010}%
  \BibitemOpen
  \bibfield  {author} {\bibinfo {author} {\bibfnamefont {J.}~\bibnamefont {Eisert}}, \bibinfo {author} {\bibfnamefont {M.}~\bibnamefont {Cramer}}, \ and\ \bibinfo {author} {\bibfnamefont {M.~B.}\ \bibnamefont {Plenio}},\ }\bibfield  {title} {\enquote {\bibinfo {title} {Colloquium: Area laws for the entanglement entropy},}\ }\href {\doibase 10.1103/revmodphys.82.277} {\bibfield  {journal} {\bibinfo  {journal} {Reviews of Modern Physics}\ }\textbf {\bibinfo {volume} {82}},\ \bibinfo {pages} {277–306} (\bibinfo {year} {2010})}\BibitemShut {NoStop}%
\bibitem [{\citenamefont {Bernien}\ \emph {et~al.}(2017)\citenamefont {Bernien}, \citenamefont {Schwartz}, \citenamefont {Keesling}, \citenamefont {Levine}, \citenamefont {Omran}, \citenamefont {Pichler}, \citenamefont {Choi}, \citenamefont {Zibrov}, \citenamefont {Endres}, \citenamefont {Greiner}, \citenamefont {Vuleti{\'{c}}},\ and\ \citenamefont {Lukin}}]{Bernien2017}%
  \BibitemOpen
  \bibfield  {author} {\bibinfo {author} {\bibfnamefont {Hannes}\ \bibnamefont {Bernien}}, \bibinfo {author} {\bibfnamefont {Sylvain}\ \bibnamefont {Schwartz}}, \bibinfo {author} {\bibfnamefont {Alexander}\ \bibnamefont {Keesling}}, \bibinfo {author} {\bibfnamefont {Harry}\ \bibnamefont {Levine}}, \bibinfo {author} {\bibfnamefont {Ahmed}\ \bibnamefont {Omran}}, \bibinfo {author} {\bibfnamefont {Hannes}\ \bibnamefont {Pichler}}, \bibinfo {author} {\bibfnamefont {Soonwon}\ \bibnamefont {Choi}}, \bibinfo {author} {\bibfnamefont {Alexander~S.}\ \bibnamefont {Zibrov}}, \bibinfo {author} {\bibfnamefont {Manuel}\ \bibnamefont {Endres}}, \bibinfo {author} {\bibfnamefont {Markus}\ \bibnamefont {Greiner}}, \bibinfo {author} {\bibfnamefont {Vladan}\ \bibnamefont {Vuleti{\'{c}}}}, \ and\ \bibinfo {author} {\bibfnamefont {Mikhail~D.}\ \bibnamefont {Lukin}},\ }\bibfield  {title} {\enquote {\bibinfo {title} {Probing many-body dynamics on a 51-atom quantum simulator},}\ }\href {\doibase 10.1038/nature24622} {\bibfield
  {journal} {\bibinfo  {journal} {Nature}\ }\textbf {\bibinfo {volume} {551}},\ \bibinfo {pages} {579--584} (\bibinfo {year} {2017})}\BibitemShut {NoStop}%
\bibitem [{\citenamefont {Zhang}\ \emph {et~al.}(2017)\citenamefont {Zhang}, \citenamefont {Pagano}, \citenamefont {Hess}, \citenamefont {Kyprianidis}, \citenamefont {Becker}, \citenamefont {Kaplan}, \citenamefont {Gorshkov}, \citenamefont {Gong},\ and\ \citenamefont {Monroe}}]{Zhang2017}%
  \BibitemOpen
  \bibfield  {author} {\bibinfo {author} {\bibfnamefont {J.}~\bibnamefont {Zhang}}, \bibinfo {author} {\bibfnamefont {G.}~\bibnamefont {Pagano}}, \bibinfo {author} {\bibfnamefont {P.~W.}\ \bibnamefont {Hess}}, \bibinfo {author} {\bibfnamefont {A.}~\bibnamefont {Kyprianidis}}, \bibinfo {author} {\bibfnamefont {P.}~\bibnamefont {Becker}}, \bibinfo {author} {\bibfnamefont {H.}~\bibnamefont {Kaplan}}, \bibinfo {author} {\bibfnamefont {A.~V.}\ \bibnamefont {Gorshkov}}, \bibinfo {author} {\bibfnamefont {Z.-X.}\ \bibnamefont {Gong}}, \ and\ \bibinfo {author} {\bibfnamefont {C.}~\bibnamefont {Monroe}},\ }\bibfield  {title} {\enquote {\bibinfo {title} {Observation of a many-body dynamical phase transition with a 53-qubit quantum simulator},}\ }\href {\doibase 10.1038/nature24654} {\bibfield  {journal} {\bibinfo  {journal} {Nature}\ }\textbf {\bibinfo {volume} {551}},\ \bibinfo {pages} {601--604} (\bibinfo {year} {2017})}\BibitemShut {NoStop}%
\bibitem [{\citenamefont {Flajolet}\ and\ \citenamefont {Sedgewick}(2009)}]{flajolet2009analytic}%
  \BibitemOpen
  \bibfield  {author} {\bibinfo {author} {\bibfnamefont {Philippe}\ \bibnamefont {Flajolet}}\ and\ \bibinfo {author} {\bibfnamefont {Robert}\ \bibnamefont {Sedgewick}},\ }\href@noop {} {\emph {\bibinfo {title} {Analytic combinatorics}}}\ (\bibinfo  {publisher} {cambridge University press},\ \bibinfo {year} {2009})\BibitemShut {NoStop}%
\end{thebibliography}%

\end{document}